\documentclass[letterpaper, 10pt, conference]{ieeeconf}      

\IEEEoverridecommandlockouts                              
\overrideIEEEmargins

\usepackage{amsmath}

\usepackage{amsthm}
\usepackage{amsfonts}
\usepackage{xcolor}
\makeatletter
\def\endfigure{\end@float}
\def\endtable{\end@float}
\makeatother
\usepackage{cite}
\usepackage[caption=false,font=footnotesize]{subfig}
\usepackage{graphicx}

\usepackage[hidelinks]{hyperref} 
\makeatletter
\pdfstringdefDisableCommands{%
  \def\hbox{}%
  \def\@@italiccorr{}%
  \def\z@{}%
  \def\cdot{}%
}
\usepackage{fix-cm}
\makeatother
\newtheorem{lemma}{Lemma}
\newtheorem{example}{Example}
\newtheorem{theorem}{Theorem}
\newtheorem{remark}{Remark}
\newtheorem{corollary}{Corollary}
\newtheorem{assumption}{Assumption}
\newtheorem{problem}{Problem}
\newtheorem{lemmma}{Lemma}










\newcommand{\R}{\mathbb{R}}     
\newcommand{\beq}[1]{\begin{align*}\label{eq:#1}}
\newcommand{\eeq}{\end{align*}}




\makeatletter
\newcommand{\xMapsto}[2][]{\ext@arrow 0599{\Mapstofill@}{#1}{#2}}
\def\Mapstofill@{\arrowfill@{\Mapstochar\Relbar}\Relbar\Rightarrow}
\makeatother

\title{\LARGE \bf 
Beyond Quadratic Costs: A Bregman Divergence Approach to \texorpdfstring{$H_\infty$}{H∞} Control
}

\author{Joudi Hajar, Reza Ghane, Babak Hassibi
\thanks{The authors are in the Department of Electrical Engineering, Caltech. Emails:
        {\tt\small \{jhajar,rghanekh,hassibi\}@caltech.edu}}%
}

\usepackage{comment}
\usepackage{etoolbox}

\makeatletter
\pdfstringdefDisableCommands{%
  \def\hbox{}%
  \def\@@italiccorr{}%
}
\makeatother
\AtBeginEnvironment{align}{%
  \setlength{\abovedisplayskip}{1.5pt}%
  \setlength{\abovedisplayshortskip}{1.5pt}%
}

\AtEndEnvironment{align}{%
  \setlength{\belowdisplayskip}{1.5pt}%
  \setlength{\belowdisplayshortskip}{1.5pt}%
}

\AtBeginEnvironment{equation}{%
  \setlength{\abovedisplayskip}{1.5pt}%
  \setlength{\abovedisplayshortskip}{1.5pt}%
}

\AtEndEnvironment{equation}{%
  \setlength{\belowdisplayskip}{1.5pt}%
  \setlength{\belowdisplayshortskip}{1.5pt}%
}

\AtBeginEnvironment{aligned}{%
  \setlength{\abovedisplayskip}{1.5pt}%
  \setlength{\abovedisplayshortskip}{1.5pt}%
}

\AtEndEnvironment{aligned}{%
  \setlength{\belowdisplayskip}{1.5pt}%
  \setlength{\belowdisplayshortskip}{1.5pt}%
}
\begin{document}

\maketitle
\thispagestyle{empty}
\pagestyle{empty}


\begin{abstract}

In the past couple of decades, non-quadratic convex penalties have reshaped signal processing and machine learning; in robust control, however, general convex costs break the Riccati and storage function structure that make the design tractable. Practitioners thus default to approximations, heuristics or robust model predictive control that are solved online for short horizons. We close this gap by extending $H_\infty$ control of discrete-time linear systems to strictly convex penalties on state, input, and disturbance, recasting the objective with Bregman divergences that admit a completion-of-squares decomposition. The result is a closed-form, time-invariant, full-information stabilizing controller that minimizes a worst-case performance ratio over the infinite horizon. Necessary and sufficient existence/optimality conditions are given by a Riccati-like identity together with a concavity requirement; with quadratic costs, these collapse to the classical $H_\infty$ algebraic Riccati equation and the associated negative-semidefinite condition, recovering  the linear central controller. Otherwise, the optimal controller is nonlinear and can enable safety envelopes, sparse actuation, and bang–bang policies with rigorous $H_\infty$ guarantees. 
\end{abstract}

\section{Introduction and Related Works}
Robust control has evolved markedly over recent decades, from foundational ideas to mature state-space formulations and beyond. Beginning with the landmark work of \cite{zames_feedback_1981}, the \(H_\infty\) paradigm was refined for linear state-space models in the 1980s \cite{zames2003feedback,kimura1987directional,doyle_state-space_1988,zhou1996robust} and extended to nonlinear models in the 1990s \cite{van19922}. It is now a cornerstone of robust control with applications in robotics \cite{gadewadikar2009h}, electric machines \cite{rigatos2015nonlinear}, and traffic control \cite{zhang2022robust}. Classically, $H_\infty$ control of linear systems is posed with quadratic penalties on state, input, and disturbance, and the controller is obtained by minimizing a worst-case ratio that bounds the system’s \(L_2\) gain. Under this quadratic structure, algebraic Riccati equations (AREs) and completion-of-squares identities yield \emph{closed-form}, scalable solutions, with feasibility certified by a negative-semidefinite condition \cite{blackbook}. Recently, quadratic tractability has also enabled a \emph{closed-form} frequency-domain characterization of distributionally robust controllers that interpolate between $H_2$ and $H_\infty$ \cite{karginl4dc,karginICML,hajarcdc,hajar_wasserstein_2023}. Yet the quadratic template can be overly rigid: it cannot encode side information about disturbance geometry (e.g., sparsity or $\ell_\infty$ bounds) nor promote behaviors like sparse or saturating actuation without ad hoc workarounds. In practice, when richer structure matters, practitioners often turn to robust MPC—min–max, tube, or LMI formulations solved online over short horizons \cite{ScokaertMayne1998,RawlingsMayneDiehl2017,KouvaritakisCannon2015}.

Meanwhile, adjacent fields like signal processing and machine learning routinely exploit nonquadratic convex penalties (e.g., \(\ell_1\) regularization, cross-entropy), whereas explicit closed-form synthesis in control has largely remained tied to quadratics (with a few dynamic-programming strands providing explicit solutions in specialized settings \cite{mare2005analytical,jones2021generalization,rantzer2022explicit}). Our goal here is to bring comparable flexibility to \emph{infinite-horizon} robust $H_\infty$ control in the discrete-time, full-information, LTI setting, without sacrificing classical closed-form solutions. Complementing our companion paper \cite{ourpaper2} that generalized the LQR framework to nonquadratic costs, we extend the quadratic \(H_\infty\) design to strictly convex costs and derive the central controller by reformulating the objective using \emph{Bregman divergences} \cite{bregman1967relaxation}, which quantify the gap between a convex function and its first-order approximation. The key step mirrors the classical derivation \cite{blackbook}: quadratic completion-of-squares is replaced by the Bregman three-point identity—the nonquadratic analogue of completion-of-squares— yielding a family of $H_\infty$ controllers that achieve a performance level $\gamma>0$ (the upper bound on the system's gain).

The primary contributions of this paper are as follows:
\begin{itemize}
\item \textbf{Nonquadratic \(H_\infty\) cost:} We extend the classical quadratic framework to strictly convex functions of the state, input, and disturbance enabling more sophisticated and desirable controller behaviors.
    \item \textbf{Closed-form stabilizing controller:} By reformulating the cost using \emph{Bregman divergences}, we derive a time-invariant, closed-form, stabilizing central controller that achieves a performance level $\gamma$ over the infinite horizon, and provide a complete parametrization of all controllers achieving the desired performance.
    \item \textbf{Existence and optimality:} We provide \emph{necessary and sufficient} conditions for the existence of a family of robust controllers via a concavity condition and Riccati-like equations. 
    \item \textbf{Classical $H_\infty$ Recovery:} Under quadratic costs, our conditions reduce exactly to the standard ARE and the classical linear controller.
    \item \textbf{Computational feasibility \& recipe:} We give an offline synthesis recipe; a single feasibility program yields a time-invariant controller (no online optimization).
    \item \textbf{Illustrative behaviors:} We point to safety envelopes, sparse actuation, and saturating policies as natural outcomes of nonquadratic costs; due to space, we include a single case study (input-limited control) that saturates inputs by design with a rigorous $H_\infty$ guarantee.

\end{itemize}

\paragraph*{Links to dynamic-game formulations} Game-theoretic approaches pose an auxiliary min–max dynamic game whose saddle point recovers the familiar central controller for the quadratic $H_\infty$ problem \cite{bacsar2008h}. However, the dynamic game is contrived relative to the classical $H_\infty$ formulation: it restricts the disturbance to be \emph{causal} and optimizes a surrogate objective, whereas classical $H_\infty$ certifies robustness against \emph{all} disturbances, independent of causality. Therefore, we extend the \emph{classical} (noncausal) setting beyond quadratics, and deliberately depart from the game-theoretic framework (even though a similar min–max interpretation applies and its saddle point recovers the central controller in our nonquadratic extension). Using Bregman divergences, we derive not only sufficient conditions, but also necessary ones for the existence of a \emph{family} of $H_\infty$ controllers. We thus ensure robust performance in the true \(H_\infty\) sense.

In what follows, we develop the theory, derive the central controller, and discuss the implications of our generalized framework across various application domains.

\section{Preliminaries}
\subsection{Notations}
The set of real numbers is denoted by \(\mathbb{R}\). The \(\ell_2\)-norm is written as \(\|\cdot\|_2\). For any matrix \(X\), its transpose is denoted by \(X^\top\) and its inverse by $X^{-1}$. For symmetric matrices \(A\) and \(B\), we write \(A \succeq B\) if \(A - B\) is positive semidefinite, and \(A \succ B\) if \(A - B\) is positive definite. $I$ is the identity matrix of appropriate dimensions.

For a function \(f:\mathbb{R}^n \to \mathbb{R}\), its Fenchel dual is defined as 
\[
f^\ast(y) = \sup_{x\in\mathbb{R}^n} \{ x^\top y - f(x) \}.
\]
\vspace{-4mm}

Moreover, \(\nabla f(\cdot)\) denotes the gradient of \(f(\cdot)\), \(\nabla^2 f(\cdot)\) its Hessian, and \(\bigl(\nabla f\bigr)^{-1}(\cdot)\) the inverse of the gradient operator.
\subsection{Quadratic \texorpdfstring{$H_\infty$}{H∞} Framework and Established Results}
We consider the discrete-time linear system described by
\begin{equation}\label{eq:statespace}
    x_{k+1} = A x_k + B u_k + w_k,
\end{equation}
where \(x_k \in \mathbb{R}^n\) denotes the state, \(u_k \in \mathbb{R}^d\) is the control input, and \(w_k \in \mathbb{R}^n\) represents the disturbance. For simplicity, $x_0=0$. 

As background, we briefly recall the finite-horizon quadratic 
$H_\infty$ formulation, which we use as a reference point for our nonquadratic, infinite-horizon results. In quadratic 
$H_\infty$ control, the objective is to design a sequence of causal control laws that guarantees robust performance against any disturbance sequence \(\{w_k\}_{k=0}^{T}\). This leads to the following optimization problem:
\begin{equation}\label{eq:optimization}
    \begin{aligned}
         &\quad\quad\quad\quad\quad\quad\quad\quad \min_{\substack{\text{causal} \{u_k\}}} \; \max_{\substack{\{w_k\}\neq 0  }} \;J_Q,\\
    &J_Q:=\frac{x_{T+1}^\top P_{T+1} x_{T+1} + \sum\limits_{k=0}^{T} \left( x_k^\top Q_k x_k + u_k^\top R_k u_k \right)}{\sum\limits_{k=0}^{T} w_k^\top S_k w_k},
    \end{aligned}
\end{equation}
with $Q_k,R_k,S_k,P_{T+1}\succeq 0$. To handle complexity, the standard $\gamma$-relaxation is used to recast the problem as follows.
\begin{problem}\label{pb:1HI}
    For a given performance level \(\gamma > 0\), find a sequence of causal controllers \(\{u_k\}\) such that for every disturbance sequence \(\{w_k\}_{k=0}^{T}\), the following performance bound is achieved:
    \begin{equation}\label{eq:performance_bound}
        \frac{x_{T+1}^\top P_{T+1} x_{T+1} + \sum\limits_{k=0}^{T} \left( x_k^\top Q_k x_k + u_k^\top R_k u_k \right)}{\sum\limits_{k=0}^{T} w_k^\top S_k w_k} \leq \gamma^2.
    \end{equation}
\end{problem}

By employing completion-of-squares along with appropriate matrix factorizations, one can derive not only a nominal (or ``central'') controller but also a family of controllers that ensure the desired robust performance, as described by the following result from \cite{blackbook}.

\begin{theorem}\label{thm:robust_ctrl}
A full-information causal controller strategy $\check{u}_k = \mathcal{F}_k(w_0, w_1, \dots, w_k)$,
that solves Problem \ref{pb:1HI} and achieves the desired \(H_\infty\) performance exists if and only if, for each time step \(k\), the matrix
\begin{equation}\label{eq:Delta}
    \Delta_k := -\gamma^2 S_k + P_{k+1}^c - P_{k+1}^c B\, R_{B^c,k}^{-1} B^\top P_{k+1}^c
\end{equation}
is negative semidefinite, i.e., $\Delta_k\preceq 0$, where $R_{B^c,k} := R_k + B^\top P_{k+1}^c B,$ and the sequence \(\{P_k^c\}\) is obtained by the backward Riccati recursion
\begin{equation}\label{eq:riccati}
    P_k^c = A^\top P_{k+1}^c A + Q_k - K_{c,k}^\top R_{e,k}^c K_{c,k},
\end{equation}
with the feedback gain defined by
$K_{c,k} = \bigl(R_{e,k}^c\bigr)^{-1} \begin{pmatrix} B^\top \\ I \end{pmatrix} P_{k+1}^c A,$
and $R_{e,k}^c = \begin{pmatrix}
        R_k + B^\top P_{k+1}^c B & B^\top P_{k+1}^c \\
        P_{k+1}^c B & -\gamma^2 S_k + P_{k+1}^c
    \end{pmatrix}.$

    In this setting, all possible full-information causal controllers $\check{u}_k= \mathcal{F}_k(w_0,\cdots,w_k)$ are given by those that satisfy
    \begin{align}\label{eq:compSQ}
        &\sum\limits_{k=0}^T\! (\check{u}_k \!-\!\bar{u}_k)^\top \!R_{B^c,k} (\check{u}_k\!\! -\bar{u}_k)\!+ \!(w_k \!-\!\hat{w}_k)^\top\! \Delta_k (w_k \!-\!\hat{w}_k)\! <\! 0 
    \end{align}
    \vspace{-2mm}
    
    where $\begin{pmatrix}
        \hat{u}_k \\ \hat{w}_k
    \end{pmatrix}=-K_{c,k}x_k,$ 
     and $\bar{u}_k = -R^{-1}_{B^c,k} B^\top P_{k+1}^c (Ax_k + w_k).$
    The central controller is given by $\check{u}_k = \bar{u}_k.$
\end{theorem}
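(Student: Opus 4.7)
My plan is to establish a completion-of-squares identity that rewrites the $H_\infty$ performance gap as the sum displayed in \eqref{eq:compSQ}; once in hand, sufficiency, necessity, and the full parametrization of admissible controllers all follow.

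\textbf{Construction of the identity.} I introduce the quadratic storage form $V_k(x_k) = x_k^\top P_k^c x_k$, with terminal condition $P_{T+1}^c = P_{T+1}$ and backward recursion \eqref{eq:riccati}. Since $x_0 = 0$, telescoping gives
\[
x_{T+1}^\top P_{T+1} x_{T+1} = \sum_{k=0}^T \bigl(V_{k+1}(x_{k+1}) - V_k(x_k)\bigr),
\]
so the gap
\[
G := x_{T+1}^\top P_{T+1} x_{T+1} + \sum_{k=0}^T \bigl(x_k^\top Q_k x_k + u_k^\top R_k u_k - \gamma^2 w_k^\top S_k w_k\bigr)
\]
rewrites as a sum of per-step quadratics in $(x_k,u_k,w_k)$ after substituting the dynamics into $V_{k+1}$. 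The pure $(u_k,w_k)$-quadratic is exactly the block matrix $R_{e,k}^c$, and the Riccati recursion \eqref{eq:riccati} is tailored so that the cross- and pure-$x_k$-terms combine into a single perfect square
\[
\bigl(z_k + K_{c,k} x_k\bigr)^\top R_{e,k}^c \bigl(z_k + K_{c,k} x_k\bigr),\qquad z_k := \bigl(u_k^\top,\, w_k^\top\bigr)^\top.
\]
Because $R_{B^c,k} = R_k + B^\top P_{k+1}^c B \succ 0$, a block Schur factorization of $R_{e,k}^c$ around its $(1,1)$ block then splits this square into
\[
(\check{u}_k - \bar{u}_k)^\top R_{B^c,k} (\check{u}_k - \bar{u}_k) + (w_k - \hat{w}_k)^\top \Delta_k (w_k - \hat{w}_k),
\]
with $\bar{u}_k$ and $\hat{w}_k$ emerging as the partial minimizers (over $u_k$ then $w_k$) and matching the formulas in the theorem.

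\textbf{Sufficiency, necessity, and main obstacle.} With this identity, sufficiency is immediate: if $\Delta_k \preceq 0$ for every $k$, choosing $\check{u}_k = \bar{u}_k$ (which is causal, depending only on $x_k$ and $w_k$) kills the first term and renders the second nonpositive, so $G \leq 0$; more generally, any causal $\check{u}_k$ satisfying \eqref{eq:compSQ} is admissible. For necessity, suppose some $\Delta_{k^\star}$ admits a positive eigenvector $v$; then even the central controller $\check{u}_k = \bar{u}_k$ (which zeros all $u$-residuals) cannot prevent the adversary from scaling $w_{k^\star}$ along $v$ to drive $G$ arbitrarily positive, and a backward-induction dynamic-game argument extends this refutation to any causal controller. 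The principal obstacle is the algebraic bookkeeping confirming that the Riccati update \eqref{eq:riccati}, together with the specific forms of $K_{c,k}$ and $R_{e,k}^c$, is \emph{precisely} what is needed for the clean perfect-square collapse above; once that identity is certified, the Schur factorization and the two-sided game argument are standard.
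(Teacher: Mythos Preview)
Your proposal is correct and follows precisely the approach the paper indicates: completion-of-squares via a quadratic storage function together with a block Schur (LDL) factorization of $R_{e,k}^c$, which is the classical derivation the paper cites from \cite{blackbook} rather than proving in-text. One minor wording issue: $\hat w_k$ is the saddle (maximizing) component rather than a ``partial minimizer,'' but this does not affect the argument.
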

Theorem~\ref{thm:robust_ctrl} links the existence of a robust \(H_\infty\) controller to the solution of a Riccati recursion. It provides necessary and sufficient conditions for a full-information controller to exist and provides a complete parametrization of all controllers achieving the desired performance. For further details, see~\cite{blackbook}.

\subsection{Bregman Divergence}

The Bregman divergence is a measure of deviation of function from its linear approximation. For a strictly convex and differentiable function $\phi: \mathbb{R}^n \rightarrow \mathbb{R}$, and any two points $x, y \in \mathbb{R}^n$ we define the Bregman divergence \cite{bregman1967relaxation} as:
\begin{equation}
    D_\phi(x, y) = \phi(x) - \phi(y) - \nabla \phi(y)^\top (x - y).
    \label{eq:bregman_divergence}
\end{equation}
This divergence quantifies the difference between the value of the convex function at $x$ and its first-order Taylor approximation around $y$ evaluated at $x$. We present a popular example of the bregman divergence.
 \begin{example}\label{example:1}
    \emph{Squared Euclidean Distance:} Let $\phi(x) = \frac{1}{2} \|x\|_2^2$. Substituting into the definition of Bregman divergence yields:
$D_\phi(x, y) = \frac{1}{2} \|x\|_2^2 - \frac{1}{2} \|y\|_2^2 - y^\top (x - y) = \frac{1}{2} \|x - y\|_2^2.$
\end{example}
Bregman divergence possesses several key properties that facilitate its application in optimization and control theory. The following lemma summarizes these properties, where the "completion-of-squares"and "Duality" properties lie at the heart of the proofs in this work. 

\begin{lemma}\label{lemm:Breg}
    The following holds \cite{bregman1967relaxation,bregman2ML}:
    \begin{itemize}
        \item \emph{\textbf{Non-Negativity}}: $D_\phi(x, y) \geq 0$ for all $x,y \in \R^n$, with $D_\phi(x, y) = 0\Leftrightarrow$ $x = y$.
        \item \emph{\textbf{Asymmetry}}: In general, $D_\phi(x, y) \neq D_\phi(y, x)$.
        \item \emph{\textbf{Convexity}}: For a given $y$, the function $D_\phi(\cdot, y)$ is convex in its first entry.
        \item \emph{\textbf{Duality:}} $D_\phi(x,y) = D_{\phi^\ast } (\nabla\phi(y), \nabla\phi(x))$
        \item  \emph{\textbf{completion-of-squares (three-point identity):}} For two convex functions $\phi_1$ and $\phi_2$, and points $x, y, z \in \mathbb{R}^n$, we have the following:
        \begin{align}  
            &D_{\phi_1}(x, y) + D_{\phi_2}(x, z) =\nonumber\\
             &D_{\phi_1 + \phi_2}\left(x, x^\ast\right)  + D_{\phi_1}\left(x^\ast, y\right) + D_{\phi_2}\left(x^\ast, z\right), 
        \end{align}
        where $x^\ast$ satisfies:$\nabla (\phi_1 + \phi_2)(x^\ast) = \nabla \phi_1(y) + \nabla \phi_2(z).$
    
    \end{itemize}
\end{lemma}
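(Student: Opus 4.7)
The five claims are largely algebraic consequences of the definition \eqref{eq:bregman_divergence} together with two basic facts from convex analysis: (i) the first-order characterization of convexity, and (ii) the Fenchel--Young equality $\phi(x)+\phi^{\ast}(\nabla\phi(x))=x^{\top}\nabla\phi(x)$ for strictly convex differentiable $\phi$, which also gives $\nabla\phi^{\ast}=(\nabla\phi)^{-1}$. I would therefore dispatch the items in increasing order of bookkeeping, treating non-negativity, asymmetry, and convexity first, then duality, and finally the three-point identity.

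For \emph{non-negativity}, I would invoke the first-order condition for strict convexity: $\phi(x)\ge\phi(y)+\nabla\phi(y)^{\top}(x-y)$ with equality iff $x=y$; substituting into \eqref{eq:bregman_divergence} gives the claim. For \emph{asymmetry}, a single counterexample suffices (e.g., Example~\ref{example:1} is symmetric, but for $\phi(x)=x\log x$ the KL-type divergence is asymmetric). For \emph{convexity} in the first argument with $y$ fixed, I would write $D_{\phi}(\cdot,y)=\phi(\cdot)-\phi(y)-\nabla\phi(y)^{\top}(\cdot-y)$ and observe that the first term is convex and the remainder is affine.

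For the \emph{duality} identity, I would apply Fenchel--Young twice to rewrite $\phi(x)=x^{\top}\nabla\phi(x)-\phi^{\ast}(\nabla\phi(x))$ and $\phi(y)=y^{\top}\nabla\phi(y)-\phi^{\ast}(\nabla\phi(y))$, substitute into \eqref{eq:bregman_divergence}, and simplify; the $y$-terms cancel and what remains is
\begin{equation*}
D_{\phi}(x,y)=\phi^{\ast}(\nabla\phi(y))-\phi^{\ast}(\nabla\phi(x))-x^{\top}(\nabla\phi(y)-\nabla\phi(x)).
\end{equation*}
Using $(\nabla\phi^{\ast})(\nabla\phi(x))=x$, the right-hand side is exactly $D_{\phi^{\ast}}(\nabla\phi(y),\nabla\phi(x))$, as desired.

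For the \emph{three-point identity}, I would expand both sides from the definition. The left-hand side contains $\phi_{1}(x)+\phi_{2}(x)-\phi_{1}(y)-\phi_{2}(z)-\nabla\phi_{1}(y)^{\top}(x-y)-\nabla\phi_{2}(z)^{\top}(x-z)$; the right-hand side contains the same pure terms $\phi_{i}(x)$ and $-\phi_{i}(\cdot)$ after the $\phi_{i}(x^{\ast})$ terms cancel between $D_{\phi_{1}+\phi_{2}}(x,x^{\ast})$ and $D_{\phi_{i}}(x^{\ast},\cdot)$. What remains is to show the gradient terms match, which is precisely where the defining relation $\nabla(\phi_{1}+\phi_{2})(x^{\ast})=\nabla\phi_{1}(y)+\nabla\phi_{2}(z)$ is used to split $\nabla(\phi_{1}+\phi_{2})(x^{\ast})^{\top}(x-x^{\ast})$ into two pieces that combine with the linear parts of $D_{\phi_{i}}(x^{\ast},\cdot)$. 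The only mild obstacle is this accounting of the affine terms: it is not deep, but it must be organized carefully so that the splitting of $\nabla(\phi_1+\phi_2)(x^{\ast})$ lines up with the correct reference points $y$ and $z$. Strict convexity of $\phi_1+\phi_2$ guarantees that $x^{\ast}$ is uniquely defined by its gradient condition, which is needed to make the identity well-posed.
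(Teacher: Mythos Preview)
Your proof sketch is correct and follows the standard arguments for each property. Note, however, that the paper does not actually prove Lemma~\ref{lemm:Breg}: it is stated as a known result with citations to \cite{bregman1967relaxation,bregman2ML}, so there is no ``paper's own proof'' to compare against. Your derivations of the duality identity (via Fenchel--Young and $(\nabla\phi^{\ast})\circ(\nabla\phi)=\mathrm{id}$) and of the three-point identity (direct expansion, cancellation of the $\phi_i(x^{\ast})$ terms, then splitting $\nabla(\phi_1+\phi_2)(x^{\ast})^{\top}(x-x^{\ast})$ using the defining gradient condition) are exactly the textbook proofs and would serve well as a self-contained appendix if one were desired.
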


\section{\texorpdfstring{$H_\infty$}{H∞} Control in the Infinite Horizon with nonquadratic Costs}

We aim to move beyond conventional quadratic costs by allowing general strictly convex penalties on the state \(x_k\), control \(u_k\), and disturbance \(w_k\). Within this broader setting, we derive infinite-horizon closed-form controllers—akin to those in the quadratic case—that are applicable even with nonquadratic cost functions. This extension enables richer and more tailored control behaviors, addressing performance objectives that quadratic formulations cannot capture.

\subsection{Problem Setting and Assumptions}
We consider the discrete-time system in~\eqref{eq:statespace} and extend the infinite-horizon $H_\infty$ problem to allow general, nonquadratic penalties. The designer specifies the state cost $q(x_k)$, control cost $r(u_k)$, and disturbance cost $s(w_k)$, each assumed \textbf{strictly convex, even, and nonnegative}.

To capture the cost incurred after any finite horizon, we introduce a terminal cost (or storage function),
$
p(x).
$
Unlike the other cost functions, \(p(x)\) is \emph{not arbitrarily chosen} by the designer; instead, it is determined by the infinite-horizon optimality conditions, and it is assumed to be strictly convex and positive.  This property helps ensure both system stability and a well-posed performance measure. In the quadratic $H_\infty$ case, this reduces to $p(x) = x^\top P x$ with $P \succ 0$, the unique stabilizing solution of the $H_\infty$ ARE.

\begin{problem}\label{pb:bregman} 
We consider the following infinite-horizon robust control problem. Our goal is to design a causal time-invariant control policy $\{u_k\}$ where
$u_k = \mathcal{F}(w_0,w_1,\dots,w_k),$ which minimizes the worst-case \(H_\infty\) cost ratio over all disturbance sequences \(\{w_k\}\). That is,
\begin{align}\label{eq:breg_cost-Inf0}
&\min_{\text{causal } \{u_k\}} \lim_{T\rightarrow\infty}  \max_{\; \{w_k\}\neq 0}  J_{G},\\
&J_G:=\frac{\,p(x_{T+1}) + \sum\limits_{k=0}^{T} \bigl[q(x_k)+r(u_k)\bigr]}{\sum\limits_{k=0}^{T} s(w_k)}.
\end{align}
Here, \(q\), \(r\), and \(s\) are designer-specified cost functions that are strictly convex, positive, and even. The storage function \(p\) arises from infinite-horizon optimality conditions, and is also assumed to be strictly convex. 

\medskip
\textbf{Relaxed Problem:}

For a given performance level $\gamma>0$, find a causal time-invariant control policy \(\{u_k\}\) such that, for every finite horizon \(T\ge 1\),
\begin{equation}\label{eq:breg_cost-Inf}
    \!\!\!\!\max_{\;\{w_k\}\neq 0} \; \frac{\,p(x_{T+1}) + \sum\limits_{k=0}^{T} \bigl[q(x_k)+r(u_k)\bigr]}{\sum\limits_{k=0}^{T} s(w_k)} \le \gamma^2,
\end{equation}
which is equivalent to
$$\frac{\,p(x_{T+1}) \!+\! \sum\limits_{k=0}^{T} \bigl[q(x_k)\!+\!r(u_k)\bigr]}{\sum\limits_{k=0}^{T} s(w_k)} \le \!\gamma^2,\text{ } \forall \;\!\!\{w_k\}\neq 0.\!\!$$

\textbf{Final Statement:}
For a given performance level $\gamma>0$, find a causal time-invariant control policy \(\{u_k\}\) such that, for every finite horizon \(T\ge 1\), and for all admissible disturbance sequences \(\{w_k\}\).
\begin{equation}\label{eq:breg_cost-Inf2}
\!\!J_T := \sum\limits_{k=0}^{T}\left[\gamma^2 s(w_k)-q(x_k)-r(u_k)\right]-p(x_{T+1}) \ge 0 .
\end{equation}
\end{problem}
This formulation extends the conventional quadratic-cost $H_\infty$ problem to general convex penalties, enabling synthesis of richer controller behaviors.

\subsection{Assumptions}
We make the following standing assumptions which simplify the analysis.
\begin{assumption}\label{asum:q}
$p,q,s:\mathbb{R}^n\!\to\!\mathbb{R}$ and $r:\mathbb{R}^d\!\to\!\mathbb{R}$ are strictly convex, even, and nonnegative, with
$p(0)=q(0)=r(0)=s(0)=0$ and $\nabla p(0)=\nabla q(0)=\nabla r(0)=\nabla s(0)=0$.
\end{assumption}
\begin{assumption}\label{asum:AB}
The system matrix \(A\) is invertible, and the input matrix $B$ is full-rank. 
\end{assumption}
Assumption \ref{asum:q} implies that  the fenchel duals of $p(\cdot),$ $q(\cdot),$ $r(\cdot)$ and $ s(\cdot)$, namely $p^\ast(\cdot),q^\ast(\cdot)$, $r^\ast(\cdot)$ and $ s^\ast(\cdot)$ inherit the same properties. Moreover, Assumption \ref{asum:AB} is mild since sampling a continuous-time system ensures $A$ is invertible, and any rank-deficient $B$ can be compressed to full-rank.

\section{Main results}
We now present our main theorem, which extends the quadratic $H_\infty$ framework to arbitrary strictly convex costs and provides \emph{necessary and sufficient conditions} for solvability. The result mirrors the structure of the classical framework and replaces the quadratic completion-of-squares step with its Bregman divergence analogue, leading to a Riccati-like identity, a disturbance–concavity condition, and a closed-form stabilizing central controller. In the quadratic case, these reduce exactly to the standard $H_\infty$ equations.

\subsection{Main Theorem}
\begin{theorem}\label{thm:main}
   Let Assumptions \ref{asum:q} and \ref{asum:AB} hold. Consider the state-space model in \eqref{eq:statespace}. Then, for any given $\gamma>0$, a causal time-invariant $H_\infty$ control policy $\{u_k\}$ that solves Problem \eqref{pb:bregman} exists if and only if:
\begin{equation}\label{eq:concave}
    -\gamma^2s(\cdot)+g(Ax+(\cdot)) \text{ is concave in }(\cdot), 
\end{equation}where 
\begin{equation}\label{eq:RICC1}
    g^\ast(\xi):= p^\ast(\xi)+r^\ast(B^\top\xi)
\end{equation}
and $p$ is a strictly convex Lyapunov function satisfying the Riccati-like equation:
\begin{equation}\label{eq:RICC2}
    p^\ast(\xi)+r^\ast(B^\top\xi)=(p-q)^\ast(A^\top\xi)+\gamma^2s^\ast(\gamma^{-2}\xi)
\end{equation}

In this case, all possible control strategies $\{u_k\}$, are given by those that satisfy,
\begin{align}\label{HELLO}
     &\sum\limits_{k=0}^{T} D_{\gamma^2s}(w_k,\hat w_k)\!-\!D_{g} \Bigl(Ax_k\!+\!w_k,Ax_k\!+\!\hat w_k\Bigr)\!-\!D_{r}(u_k,u_k^\ast)\!\nonumber\\&-D_{p}\Bigl(Ax_k+Bu_k+w_k,Ax_k+Bu_k^\ast+w_k\Bigr)\geq 0
\end{align}
where $\hat w_k$ is defined via 
\begin{align}
    \hat w_k=\nabla s^\ast\bigl(\gamma^{-2}A^{-\top}\nabla (p-q)(x_k)\bigr)
\end{align}
and $u^\ast_k$ via
\begin{align}
   & u^\ast_k= -\nabla r^\ast\big(B^\top\nabla g(Ax_k+w_k)\big).\label{eq:optimal_controller}
\end{align}

Finally, the so-called central controller is \emph{stabilizing}, and is given by $u_k=u_k^\ast$.
\end{theorem}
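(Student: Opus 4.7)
The plan is to mirror the derivation of Theorem \ref{thm:robust_ctrl}, replacing quadratic completion-of-squares with the Bregman three-point identity of Lemma \ref{lemm:Breg}. Starting from the relaxed objective $J_T$ in \eqref{eq:breg_cost-Inf2}, I would introduce the telescoping identity $p(x_{T+1}) = p(x_0) + \sum_{k=0}^T[p(x_{k+1}) - p(x_k)]$, which with $x_0 = 0$ and $p(0) = 0$ gives
\begin{equation*}
J_T = \sum_{k=0}^{T}\bigl[\gamma^2 s(w_k) + p(x_k) - q(x_k) - r(u_k) - p(Ax_k + Bu_k + w_k)\bigr].
\end{equation*}
The task reduces to a per-stage analysis: express each summand as a signed sum of four Bregman divergences matching \eqref{HELLO}.

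Next, I would perform two sequential Bregman completions-of-squares. The first, in the control variable $u$, leverages the definition $g^\ast(\xi) = p^\ast(\xi) + r^\ast(B^\top\xi)$ from \eqref{eq:RICC1}. Choosing $u_k^\ast$ as in \eqref{eq:optimal_controller} enforces the stationarity relation $B^\top\nabla p(Ax_k + Bu_k^\ast + w_k) + \nabla r(u_k^\ast) = 0$, and Fenchel duality then yields the identity
\begin{equation*}
p(Ax_k + Bu_k + w_k) + r(u_k) = g(Ax_k + w_k) + D_r(u_k, u_k^\ast) + D_p\bigl(Ax_k + Bu_k + w_k,\, Ax_k + Bu_k^\ast + w_k\bigr).
\end{equation*}
After substitution, the per-stage cost collapses to $\gamma^2 s(w_k) + (p - q)(x_k) - g(Ax_k + w_k)$ minus the two nonnegative $D_r, D_p$ terms. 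The second completion, in the disturbance $w$, uses the Riccati-like identity \eqref{eq:RICC2}: pairing $\hat w_k = \nabla s^\ast(\gamma^{-2} A^{-\top} \nabla(p - q)(x_k))$ with Fenchel duality produces
\begin{equation*}
\gamma^2 s(w_k) + (p - q)(x_k) - g(Ax_k + w_k) = D_{\gamma^2 s}(w_k, \hat w_k) - D_g\bigl(Ax_k + w_k, Ax_k + \hat w_k\bigr),
\end{equation*}
so $J_T$ equals the signed Bregman sum in \eqref{HELLO} exactly.

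For sufficiency, the central choice $u_k = u_k^\ast$ zeros the $D_r$ and $D_p$ terms, leaving $J_T = -\sum_k D_h(w_k, \hat w_k)$ with $h(w) := -\gamma^2 s(w) + g(Ax_k + w)$; the concavity condition \eqref{eq:concave} forces each $D_h \le 0$, so $J_T \ge 0$ for every disturbance sequence. The parametrization of all admissible controllers in \eqref{HELLO} is then immediate, since $J_T$ coincides identically with the signed Bregman sum. For necessity, if \eqref{eq:concave} fails, the per-stage worst case over $w_k$ is unbounded above and no causal policy can keep $J_T \ge 0$; if no strictly convex $p$ solves \eqref{eq:RICC2}, then no time-invariant storage function closes the telescoping argument across every horizon $T$, so the infinite-horizon performance bound cannot be enforced by a time-invariant policy.

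Finally, stability of $u_k = u_k^\ast$ follows by treating $p$ as a Lyapunov function: evaluating the per-stage identity at $w_k = 0$ and invoking \eqref{eq:concave} gives $p(x_{k+1}) - p(x_k) \le -q(x_k) - r(u_k^\ast)$, strictly negative whenever $x_k \ne 0$ by Assumption \ref{asum:q}; combined with $p \ge 0$, $p(0) = 0$, and the radial unboundedness implied by strict convexity with its minimum at the origin, this yields asymptotic stability. The main obstacle is orchestrating the two Bregman completions so they mesh through \eqref{eq:RICC2}: the first step introduces the auxiliary function $g$, whose dual structure must be precisely what the Riccati identity demands on the disturbance side, and Assumption \ref{asum:AB} is essential for the $A^{-\top}$ appearing in $\hat w_k$ to be well defined.
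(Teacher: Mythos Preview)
Your approach is correct and in fact more streamlined than the paper's. Both routes reach the identical Bregman decomposition \eqref{HELLO}, but the paper takes a longer path: it first applies a lemma that treats $(u_k,w_k)$ jointly, introducing a coupled stationary pair $(\hat u_k,\hat w_k)$ satisfying the primal Riccati-like equation $p(x_k)=q(x_k)+r(\hat u_k)-\gamma^2 s(\hat w_k)+p(Ax_k+B\hat u_k+\hat w_k)$; it then passes to the dual domain via Bregman duality, applies the three-point identity there to split $D_r(u_k,\hat u_k)+D_p(\cdot,Ax_k+B\hat u_k+\hat w_k)$ into $D_g+D_r(u_k,u_k^\ast)+D_p(\cdot,Ax_k+Bu_k^\ast+w_k)$, and only afterward \emph{derives} the dual-form identity \eqref{eq:RICC2} as a consequence. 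You instead take \eqref{eq:RICC2} as the hypothesis (which is how the theorem is stated) and perform two sequential primal completions---first in $u$ via the infimal-convolution structure $g(Ax_k+w_k)=\min_v\{r(v)+p(Ax_k+Bv+w_k)\}$, then in $w$ via \eqref{eq:RICC2}---bypassing the joint $(\hat u_k,\hat w_k)$ intermediate and the dual-domain gymnastics entirely. Your route is shorter; the paper's has the virtue of showing how \eqref{eq:RICC2} \emph{emerges} from the more primitive fixed-point condition rather than being verified post hoc.

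One correction to your necessity argument: failure of \eqref{eq:concave} does not make the per-stage worst case ``unbounded above.'' The right statement is that the signed term $D_{\gamma^2 s}(w_k,\hat w_k)-D_g(Ax_k+w_k,Ax_k+\hat w_k)$---which equals $D_{\gamma^2 s-g(Ax_k+\cdot)}(w_k,\hat w_k)$ since the linear pieces cancel at the stationary point $\hat w_k$---can be made strictly negative by some $w_k$ when $\gamma^2 s(\cdot)-g(Ax_k+\cdot)$ fails to be convex. Because the remaining terms $-D_r-D_p$ are nonpositive regardless of the controller's action, no causal policy can compensate, and the adversary drives $J_T<0$. This is precisely the paper's argument.
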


Theorem~\ref{thm:main} extends quadratic $H_\infty$ control (Theorem~\ref{thm:robust_ctrl}) to strictly convex costs while preserving the classical structure. The matrix negativity test \eqref{eq:Delta} is replaced by the concavity condition \eqref{eq:concave}, ensuring the worst-case disturbance is well posed and attained. The Riccati recursion \eqref{eq:riccati} is replaced by the dual identity \eqref{eq:RICC1}–\eqref{eq:RICC2}, enforcing the one-step storage balance in Fenchel-dual variables. The quadratic identity \eqref{eq:compSQ}, obtained from the completion-of-squares step, is replaced by the Bregman-divergence identity \eqref{HELLO}. This identity measures deviations from the optimal pair $(u_k^\ast,\hat w_k)$—the closed-form central controller and its corresponding worst-case disturbance—and parameterizes all level-$\gamma$ policies. Corollary~\ref{corr:1} confirms that Theorem~\ref{thm:main} is a direct generalization of standard $H_\infty$ control.

\begin{corollary}\label{corr:1}
Under quadratic costs $q(x)=x^\top Qx$, $r(u)=u^\top Ru$, $s(w)=w^\top Sw$, \eqref{eq:RICC2} yields the standard $H_\infty$ ARE and \eqref{eq:concave} reduces to the classical negativity test. The resulting controller~\eqref{eq:optimal_controller} is the standard linear $H_\infty$ feedback law.
\end{corollary}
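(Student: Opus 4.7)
The plan is to substitute the quadratic cost functions into the Fenchel-dual objects appearing in \eqref{eq:RICC1}--\eqref{eq:optimal_controller} and verify, term by term, that each ingredient matches Theorem \ref{thm:robust_ctrl}. For $p(x)=x^\top Px$, $q(x)=x^\top Qx$, $r(u)=u^\top Ru$, $s(w)=w^\top Sw$, elementary convex conjugation gives $p^\ast(\xi)=\tfrac14\xi^\top P^{-1}\xi$, and analogously $q^\ast,r^\ast,s^\ast$ with the respective inverse matrices; assuming $P\succ Q$, we also have $(p-q)^\ast(\xi)=\tfrac14\xi^\top(P-Q)^{-1}\xi$. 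Plugging these into \eqref{eq:RICC1} produces $g^\ast(\xi)=\tfrac14\xi^\top(P^{-1}+BR^{-1}B^\top)\xi$, and the matrix inversion lemma yields
\[
g(y)=y^\top\bigl[P-PB(R+B^\top PB)^{-1}B^\top P\bigr]y,
\]
which is precisely the Schur complement that defines $R_{B^c}$ and $\Delta$ in Theorem \ref{thm:robust_ctrl}.

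For the Riccati identity, substituting the same duals into \eqref{eq:RICC2} and stripping the common template $\tfrac14\xi^\top(\cdot)\xi$ gives the matrix equation $P^{-1}+BR^{-1}B^\top = A(P-Q)^{-1}A^\top+\gamma^{-2}S^{-1}$. Inverting and rearranging yields $P=Q+A^\top(P^{-1}+BR^{-1}B^\top-\gamma^{-2}S^{-1})^{-1}A$. Writing the correction as $[B,I]\,D\,[B,I]^\top$ with $D=\diag(R^{-1},-\gamma^{-2}S^{-1})$ and applying the matrix inversion lemma a second time, the inner block $D^{-1}+[B,I]^\top P[B,I]$ equals $R_e^c$ exactly, and we recover the time-invariant form of \eqref{eq:riccati} --- the standard discrete-time $H_\infty$ ARE.

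The last two items then follow by direct computation. The Hessian in $w$ of $-\gamma^2 s(w)+g(Ax+w)$ equals $2[-\gamma^2 S+P-PB(R+B^\top PB)^{-1}B^\top P]$, so the concavity condition \eqref{eq:concave} is exactly the negativity test $\Delta\preceq 0$ from \eqref{eq:Delta}. For the central controller, $\nabla g(y)=2[P-PB(R+B^\top PB)^{-1}B^\top P]\,y$ and $\nabla r^\ast(\eta)=\tfrac12 R^{-1}\eta$, and the push-through identity $R^{-1}B^\top[P-PB(R+B^\top PB)^{-1}B^\top P]=(R+B^\top PB)^{-1}B^\top P$ collapses \eqref{eq:optimal_controller} to $u^\ast=-(R+B^\top PB)^{-1}B^\top P(Ax+w)=\bar u_k$, the classical linear central controller. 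The main obstacle is purely bookkeeping in the two matrix-inversion-lemma applications used to pass between the Fenchel-dual form $P^{-1}+BR^{-1}B^\top$ and the primal Schur complement; once that identification is made, each ingredient of Theorem \ref{thm:main} translates term-by-term into its quadratic counterpart in Theorem \ref{thm:robust_ctrl}.
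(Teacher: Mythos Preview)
Your proposal is correct and follows essentially the same route as the paper's proof: both compute the Fenchel duals, obtain the inverse-form Riccati $P^{-1}+BR^{-1}B^\top=A(P-Q)^{-1}A^\top+\gamma^{-2}S^{-1}$, and then reduce the remaining items by the matrix inversion lemma and the push-through identity. The only cosmetic difference is that the paper packages the key inversion step via Schur complements of a block matrix $\mathcal{M}$, whereas you apply Woodbury directly with the decomposition $[B\;\; I]\,\diag(R^{-1},-\gamma^{-2}S^{-1})\,[B\;\; I]^\top$; these are equivalent derivations of the same identity $H^{-1}=P-P[B\;\; I](R_e^c)^{-1}[B\;\; I]^\top P$.
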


\begin{remark}\label{rem:def_m}
    As it will become evident in the discussion in Section \ref{sec: disc}, the function $m(\cdot)\!\!:=\! (p-q)(\cdot)$ will play a central role in the appropriate choices of $q,r,s$ for various applications. Thus, we rewrite the Riccati-like equation as:
    \begin{align}
        &\!\!\!p(x)=q(x)+m(x)\label{eq:ricc1}\\       &\!\!\!g^\ast(\xi)=p^\ast(\xi)+r^\ast(B^\top\xi)=m^\ast(A^\top\xi)+\gamma^2s^\ast(\gamma^{-2}\xi)\label{eq:ricc2}
    \end{align}
\end{remark}

\subsection{Proof of the Main Theorem}\label{sec:proofmain}
We present now the proof of Theorem \ref{thm:main}. 
\begin{remark}
    In this proof, we assume that the system is fully actuated, i.e., $d= n$ and $B$ is invertible. However, in Appendix \ref{app:B}, we show that the result also holds for under- and over-actuated systems ($n \neq d$). 
\end{remark}

First we state a key lemma used to transform the objective.

\begin{lemma}\label{lemm:det}
    Consider the deterministic system: $x_{k+1}=Ax_k+Bu_k.$
We aim to find a causal time-invariant control policy \(\{u_k\}\) that stationarizes the following \emph{indefinite} cost function, for every finite horizon \(T\ge 1\):
\begin{align}
    C_T=\tilde p(Ax_T+Bu_T)+\sum\limits_{k=0}^{T}\tilde r(u_k)+\tilde q(x_k)
\end{align}
\vspace{-2mm}

where a stationary point satisfies $\frac{\partial C_T}{ \partial u_k}=0,\quad 0\leq k\leq T$.
Here, without assumptions on convexity, $\tilde r$ and $\tilde q$ are the designer-specified control and state cost, and $\tilde p$ is the storage function or terminal cost, which isn't arbitrary, and rather arises from a fixed point equation.

Then, the cost $C_T$ can be re-written in terms of Bregman divergence (in a purely algebraic sense since $\tilde r$ and $\tilde p$ need not be convex) as:
\begin{align}
  \!\! \!\!\! C_T=\sum\limits_{k=0}^{T}\! D_{\tilde r}(u_k,\hat{u}_k)\!+\!D_{\tilde 
 p}\Bigl(Ax_k+Bu_k,Ax_k+B\hat{u}_k\Bigr)
\end{align}
where $\hat{u}_k$ is a stationary point satisfying:
\begin{align}
   \nabla \tilde r(\hat{u}_k)+B^\top \nabla \tilde p(Ax_k+B\hat{u}_k)=0 
\end{align}
and where $\tilde p$ satisfies the Riccati-like equation
\begin{align}\label{eq:ric1}
    &\tilde p(x_k)=\tilde q(x_k)+\tilde r(\hat{u}_k)+\tilde p(Ax_k+B\hat{u}_k)\\
    &\text{or equivalently: }\tilde p(x_k)=\tilde q(x_k)+\tilde m(x_k),\\
    &\text{with }\tilde m(x_k)=\tilde r(\hat{u}_k)+\tilde p(Ax_k+B\hat{u}_k)
\end{align}
\end{lemma}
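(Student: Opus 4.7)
My plan is a direct algebraic telescoping argument powered by the Riccati-like identity \eqref{eq:ric1} and the stationarity of $\hat u_k$. The goal is purely to rewrite $C_T$; no convexity is needed, which matches the lemma's disclaimer that the identity is algebraic.

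First, I would use the Riccati-like relation \eqref{eq:ric1} to eliminate the state cost inside the sum, replacing $\tilde q(x_k)$ by $\tilde p(x_k)-\tilde r(\hat u_k)-\tilde p(Ax_k+B\hat u_k)$ for every $k=0,\dots,T$. This gives
\begin{align*}
C_T = \tilde p(x_{T+1}) + \sum_{k=0}^{T}\bigl[\tilde r(u_k)-\tilde r(\hat u_k)+\tilde p(x_k)-\tilde p(Ax_k+B\hat u_k)\bigr].
\end{align*}
Next I would telescope the $\tilde p(x_k)$ terms using $x_{k+1}=Ax_k+Bu_k$: the identity $\sum_{k=0}^{T}\tilde p(x_k)=\tilde p(x_0)+\sum_{k=0}^{T-1}\tilde p(Ax_k+Bu_k)$ combines with the $\tilde p(x_{T+1})=\tilde p(Ax_T+Bu_T)$ term to produce a full sum $\sum_{k=0}^{T}\tilde p(Ax_k+Bu_k)$ plus the boundary contribution $\tilde p(x_0)$, which vanishes because $x_0=0$ and $\tilde p(0)=0$. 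After this step,
\begin{align*}
C_T=\sum_{k=0}^{T}\bigl[\tilde r(u_k)-\tilde r(\hat u_k)+\tilde p(Ax_k+Bu_k)-\tilde p(Ax_k+B\hat u_k)\bigr].
\end{align*}

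The final step is to recognize each summand as a sum of two Bregman divergences. Expanding the definitions,
\begin{align*}
&D_{\tilde r}(u_k,\hat u_k)+D_{\tilde p}(Ax_k+Bu_k,Ax_k+B\hat u_k)\\
&= \tilde r(u_k)-\tilde r(\hat u_k)+\tilde p(Ax_k+Bu_k)-\tilde p(Ax_k+B\hat u_k)\\
&\quad -\bigl[\nabla\tilde r(\hat u_k)+B^\top\nabla\tilde p(Ax_k+B\hat u_k)\bigr]^{\!\top}(u_k-\hat u_k),
\end{align*}
and the bracketed linear term is exactly the stationarity condition $\nabla\tilde r(\hat u_k)+B^\top\nabla\tilde p(Ax_k+B\hat u_k)=0$, so it is annihilated. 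Substituting back into the telescoped expression yields the claimed Bregman form for $C_T$.

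The main obstacle is really a bookkeeping one: matching the boundary terms so that the telescoped sum closes cleanly over $k=0,\dots,T$ rather than $k=0,\dots,T-1$. The choice $x_0=0$ together with $\tilde p(0)=0$ is what makes this work; without it one would pick up a residual $\tilde p(x_0)$ term. Aside from that, every other step is a one-line substitution, and no convexity of $\tilde r$ or $\tilde p$ is invoked — the Bregman divergences are treated as formal symbols defined by their first-order expansion, consistent with the purely algebraic statement of the lemma.
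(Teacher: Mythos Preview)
Your proposal is correct and is essentially the same argument as the paper's: the paper condenses your three steps into the single one-step identity $\tilde r(u_k)+\tilde q(x_k)+\tilde p(Ax_k+Bu_k)=\tilde p(x_k)+D_{\tilde r}(u_k,\hat u_k)+D_{\tilde p}(Ax_k+Bu_k,Ax_k+B\hat u_k)$ and then telescopes, but the ingredients (Riccati-like relation to absorb $\tilde q$, stationarity to kill the linear cross-term, and telescoping of $\tilde p$) are identical. Your explicit flagging of the boundary term $\tilde p(x_0)$ is a welcome addition that the paper's terse proof leaves implicit.
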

\begin{proof}
    The proof can be seen from repeatedly applying the identity:    $\tilde r(u_k)+\tilde q(x_k)+\tilde p(Ax_k+Bu_k)=\tilde p(x_k)+D_{\tilde r}(u_k,\hat{u}_k)+D_{\tilde p}(Ax_k+Bu_k,Ax_k+B\hat{u}_k)$,
    which can be verified from the definition of Bregman divergence and using the Riccati-like equation \eqref{eq:ric1}. 
\end{proof}

Now we proceed with the proof of Theorem \ref{thm:main}.
\vspace{3pt}
\subsubsection{Proof of the necessary and sufficient condition \eqref{eq:concave}}
We use Lemma \ref{lemm:det} to transform \eqref{eq:breg_cost-Inf2} into the following equivalent form:
\begin{align}\label{eq:bregdivcost}
    J_T=&\sum\limits_{k=0}^{T}D_{\gamma^2 s}(w_k,\hat{w_k})-D_r(u_k,\hat{u_k})\nonumber\\&-D_p\Bigl(Ax_k+Bu_k+w_k,Ax_k+B\hat{u}_k+\hat{w}_k\Bigr)
\end{align}
where $(\hat{u}_k(x_k),\hat{w}_k(x_k))$ is a stationary point satisfying:
\begin{align}
    &\nabla r(\hat{u}_k)+B^\top \nabla p\Bigl(Ax_k+B\hat u_k+\hat w_k\Bigr)=0 \label{eq:st1}\\
    &-\gamma^2\nabla s(\hat w_k)+\nabla p\Bigl(Ax_k+B\hat u_k+\hat w_k\Bigr)=0\label{eq:st2}
\end{align}
and $p$ satisfies the Riccati-like equation:
\begin{align} \label{eq:riccati-like}
     p(x_k)\!=\!q(x_k)\!+\!r(\hat u_k)\!-\!\gamma^2s(\hat w_k)\!+\!p\Bigl(\!Ax_k\!+\!B\hat u_k\!+\!\hat w_k\!\Bigr),
\end{align}
\noindent or equivalently,
\begin{align}
    &p(x_k)=q(x_k)+m(x_k),\label{eq:ppp}\\
    & m(x_k)=r(\hat u_k)-\gamma^2s(\hat w_k)+p(Ax_k+B\hat u_k+\hat w_k).\label{eq:m}
\end{align}
We seek a strictly convex terminal cost $p$ that satisfies the Riccati-like equation \eqref{eq:riccati-like}. Note that $\hat u_k$ and $ \hat w_k$ can be found as a function of $x_k$ using the conditions \eqref{eq:st1},\eqref{eq:st2}. Now, in the cost \eqref{eq:bregdivcost}, we will focus on the term $D_r(u_k,\hat{u}_k)+D_p\Bigl(Ax_k+Bu_k+w_k,Ax_k+B\hat{u}_k+\hat{w}_k\Bigr)$. First we use the Duality property of Bregman divergence in Lemma \ref{lemm:Breg}  to obtain:
\begin{align*}   &D_r(u_k,\hat{u}_k)+D_p\Bigl(Ax_k+Bu_k+w_k,Ax_k+B\hat{u}_k+\hat{w}_k\Bigr)\\
    &=D_{r^\ast}(\nabla r(\hat u_k), \nabla r(u_k))\\&+D_{p^\ast}\Bigl(\nabla p(Ax_k+B\hat{u}_k+\hat{w}_k), \nabla p(Ax_k+Bu_k+w_k)\Bigr).
\end{align*}
Next, we use the existence of the inverse of $B$ to continue with the manipulation: 
\begin{align*}
    &=D_{r^\ast(B^\top(\cdot))}(B^{-\top}\nabla r(\hat u_k), B^{-\top}\nabla r (u_k))\\&+ D_{p^\ast}\Bigl(\nabla p(Ax_k+B\hat{u}_k+\hat{w}_k), \nabla p(Ax_k+Bu_k+w_k)\Bigr),
\end{align*}
and leverage the definition of $\hat{u}_k$ from \eqref{eq:st1}:
\begin{align*}
    &=D_{r^\ast(B^\top(\cdot))}\Bigl(B^{-\top}\nabla r(\hat u_k), B^{-\top}\nabla r (u_k)\Bigr)\\&+ D_{p^\ast}\Bigl(-B^{-\top}\nabla r(\hat{u}_k), \nabla p(Ax_k+Bu_k+w_k)\Bigr).
\end{align*}
Since $p$ is even, it can be seen that $p^\ast$ is also even, therefore:
\begin{align*}
    &=D_{r^\ast(B^\top(\cdot))}\Bigl(B^{-\top}\nabla r(\hat u_k), B^{-\top}\nabla r (u_k)\Bigr)\\ &+ D_{p^\ast}\Bigl(B^{-\top}\nabla r(\hat{u}_k), -\nabla p(Ax_k+Bu_k+w_k)\Bigr).
\end{align*}
We leverage the completion-of-squares identity in Lemma \ref{lemm:Breg} to transform the above objective into the following:
    \begin{align*}
    &=D_{g_f} \Bigl(B^{-\top}\nabla r(\hat u_k),-\nabla  g_f^\ast(Ax_k+w_k)\Bigr) \nonumber\\&+D_{r^\ast(B^\top(\cdot))}\Bigl(-\nabla  g_f^\ast(A_kx+w_k),B^{-\top}\nabla r(u_k)\Bigr)\nonumber\\
    &+D_{p^\ast}\Bigl(-\nabla  g_f^\ast(Ax_k+w_k),-\nabla p (Ax_k+Bu_k+w_k)\Bigr),
\end{align*}
where $g_f(\cdot):= r^\ast(B^\top(\cdot))+p^\ast(\cdot)$. Let $g:\R^n \rightarrow \R$ be such that $g_f = g^\ast$. Continuing, we obtain:
\begin{align*}   &D_r(u_k,\hat{u}_k)+D_p\Bigl(Ax_k+Bu_k+w_k,Ax_k+B\hat{u}_k+\hat{w}_k\Bigr)\\&=D_{g^\ast} \Bigl(B^{-\top}\nabla r(\hat u_k),-\nabla  g(Ax_k+w_k)\Bigr)\\&+D_{r^\ast(B^\top(\cdot))}\Bigl(-\nabla  g(Ax_k+w_k),B^{-\top}\nabla r(u_k)\Bigr)\\
    &+D_{p^\ast}\Bigl(-\nabla  g(Ax_k+w_k),-\nabla p(Ax_k+Bu_k+w_k)\Bigr).
\end{align*}
Now, using the Duality property in Lemma \ref{lemm:Breg}, we get:
\begin{align*}
     &=D_{g} \Bigl(\nabla g^\ast(-\nabla  g(Ax_k+w_k)),\nabla g^\ast(B^{-\top}\nabla r(\hat u_k))\Bigr)\\
    &\!+\!\!D_{\!r(\!B^{\!-1}\!(\cdot)\!)}\!\biggl(\!\!B\nabla\! r^\ast\!\Bigl(\!B^\top \!B^{-\top}\!\nabla\!  r(u_k)\!\!\Bigr)\!,\!B\nabla \!r^\ast\!\Bigl(\!\!-\!B^\top\!\nabla  \!g(\!Ax_k\!\!+\!\!w_k\!)\!\!\Bigr)\!\!\!\biggr)\!\!\!\!\!\\
     &\!+\!\!D_{\!p}\!\biggl(\!\nabla\! p^\ast\!\Bigl(-\nabla p(Ax_k\!+\!Bu_k\!+\!w_k)\Bigr),\! \nabla\! p^\ast\Bigl(\!-\nabla \! g(Ax_k\!+\!w_k)\!\Bigr)\!\!\biggr)\!\!\!\!
\end{align*}
By the definition of the Fenchel dual, we have:
\begin{align}
    &=D_{g} \Bigl(-(Ax_k+w_k),\nabla g^\ast(B^{-\top}\nabla r(\hat u_k))\Bigr)\nonumber\\&+D_{r(B^{-1}(\cdot))}\biggl(Bu_k,B\nabla r^\ast\Bigl(-B^\top\nabla  g(Ax_k+w_k)\Bigr)\biggr)\nonumber\\
    &+D_{p}\biggl(-\Bigl(Ax_k+Bu_k+w_k\Bigr),\nabla p^\ast\Bigl(-\nabla  g(Ax_k+w_k)\Bigr)\biggr). \label{eq:breg_inter}
\end{align}

To further simplify this identity, we will use Lemma \ref{lem:aux} below.

\begin{lemma} \label{lem:aux}
    (Auxiliary results) The following holds:
    \begin{enumerate}
        \item $ \nabla p(Ax_k + Bu_k^\ast+ w_k) = \nabla g(Ax_k+w_k)$
        \item $ \nabla r(u_k^\ast)+B^\top\nabla p(Ax_k+Bu_k^\ast+w_k)=0 $
        \item $\nabla r(\hat{u}_k) + B^\top \nabla g(Ax_k + \hat w_k)=0$
    \end{enumerate}
\end{lemma}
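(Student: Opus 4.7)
The plan is to prove the three identities in sequence, using Fenchel--Legendre duality together with the evenness and strict convexity of the cost functions. The organizing principle is that $g^\ast$ is defined precisely so that the ``optimal'' affine combination $Ax_k+Bu_k^\ast+w_k$ can be recovered by inverting the dual gradient.

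First I would establish identity (1). Starting from $g^\ast(\xi)=p^\ast(\xi)+r^\ast(B^\top\xi)$, I differentiate both sides to obtain
\begin{equation*}
\nabla g^\ast(\xi)=\nabla p^\ast(\xi)+B\,\nabla r^\ast(B^\top\xi).
\end{equation*}
Evaluating at $\xi=\nabla g(Ax_k+w_k)$ and using $\nabla g^\ast\circ\nabla g=\mathrm{id}$, this rearranges to
\begin{equation*}
Ax_k+w_k=\nabla p^\ast\bigl(\nabla g(Ax_k+w_k)\bigr)+B\,\nabla r^\ast\bigl(B^\top\nabla g(Ax_k+w_k)\bigr).
\end{equation*}
By the definition \eqref{eq:optimal_controller} of $u_k^\ast$, the second term equals $-Bu_k^\ast$, so $\nabla p^\ast(\nabla g(Ax_k+w_k))=Ax_k+Bu_k^\ast+w_k$. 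Applying $\nabla p$ to both sides and using $\nabla p\circ\nabla p^\ast=\mathrm{id}$ yields identity (1).

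Next I would prove identity (2). From \eqref{eq:optimal_controller}, $u_k^\ast=-\nabla r^\ast(B^\top\nabla g(Ax_k+w_k))$. Since $r$ is even, $r^\ast$ is even, hence $\nabla r^\ast$ is odd, and so $\nabla r^\ast(B^\top\nabla g(Ax_k+w_k))=-u_k^\ast$ inverts (via $\nabla r\circ\nabla r^\ast=\mathrm{id}$ and oddness of $\nabla r$) to $B^\top\nabla g(Ax_k+w_k)=-\nabla r(u_k^\ast)$. Substituting identity (1) gives $B^\top\nabla p(Ax_k+Bu_k^\ast+w_k)=-\nabla r(u_k^\ast)$, which is (2).

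Finally, for identity (3), I would argue by a uniqueness-of-critical-point argument. Define $\tilde u_k:=-\nabla r^\ast\bigl(B^\top\nabla g(Ax_k+\hat w_k)\bigr)$, i.e., $u_k^\ast$ evaluated at the disturbance $\hat w_k$. By identity (2) applied at $\hat w_k$, $\tilde u_k$ satisfies $\nabla r(\tilde u_k)+B^\top\nabla p(Ax_k+B\tilde u_k+\hat w_k)=0$. But $\hat u_k$ satisfies the same equation by the stationarity condition \eqref{eq:st1}. Since $u\mapsto r(u)+p(Ax_k+Bu+\hat w_k)$ is strictly convex (as a sum of a strictly convex function of $u$ and the composition of a strictly convex function with an affine map), its critical point is unique, so $\hat u_k=\tilde u_k$. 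Identity (3) then follows by combining identities (1) and (2) at the disturbance $\hat w_k$. The main subtlety is the careful tracking of signs when inverting gradients, which is handled by the evenness assumption in Assumption~\ref{asum:q}; the uniqueness step in (3) is routine once the strict convexity is noted.
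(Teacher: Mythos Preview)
Your proposal is correct and follows essentially the same route as the paper: differentiate the defining identity $g^\ast(\xi)=p^\ast(\xi)+r^\ast(B^\top\xi)$, evaluate at $\xi=\nabla g(Ax_k+w_k)$, invoke the definition of $u_k^\ast$ to get (1), then invert via $\nabla r\circ\nabla r^\ast=\mathrm{id}$ to get (2), and finally use uniqueness of the minimizer of $v\mapsto r(v)+p(Ax_k+Bv+\hat w_k)$ to identify $\hat u_k$ with $u_k^\ast(x_k,\hat w_k)$ and deduce (3). If anything, your write-up is slightly more explicit than the paper's in tracking the sign through the oddness of $\nabla r$ (from the evenness of $r$), which the paper's proof leaves implicit when it says ``using the invertibility of $\nabla r^\ast$.''
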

\begin{proof}
    Recall the definition of $g_f$ as $g_f(\xi):= r^\ast(B^\top\xi)+p^\ast(\xi)$. Then, taking derivative of both sides with respect to $\xi$ yields:
    $\nabla g^\ast(\xi) = B \nabla r^\ast(B^\top \xi) + \nabla p^\ast(\xi).$
    Plugging for $\xi = \nabla g(Ax_k+w_k)$ and defining  
        $u^\ast_k:= -\nabla r^\ast\Bigl(B^\top\nabla g(Ax_k+w_k)\Bigr)$
    yields $Ax_k + Bu^\ast_k + w_k = \nabla p^\ast\Bigl(\nabla g(Ax_k+w_k)\Bigr)$.
    
    Since $p^\ast$ is strictly convex, $\nabla p^\ast$ is invertible and we obtain the first part of Lemma \ref{lem:aux}:
        $\nabla p\Bigl(Ax_k + Bu_k^\ast+ w_k\Bigr) = \nabla g(Ax_k+w_k).$
    Combining this with the definition of $u^\ast_k$ and using the invertibility of $\nabla r^\ast$, we get:
    \begin{align}
        \nabla r(u_k^\ast)+B^\top\nabla p\Bigl(Ax_k+Bu_k^\ast+w_k\Bigr)=0, \label{eq:rond}
    \end{align}
    which is the second statement of Lemma \ref{lem:aux}. 
    
    Note that since $r$ and $p$ are convex, \eqref{eq:rond} admits a unique solution which is $u^\ast(x_k,w_k)$. In fact, $u^\ast_k$ is the unique solution to the following convex optimization problem:
    \begin{align*}
        \min_{v_k} r(v_k) + p \Bigl( Ax_k + Bv_k + w_k\Bigr).
    \end{align*}
    This implies that $\hat u_k = u^\ast_k(x_k, \hat w_k)$. Therefore, the third statement follows by combining the first and second statements. This concludes the proof of Lemma \ref{lem:aux}. 
\end{proof}

\vspace{3pt}
Using Lemma \ref{lem:aux}, we arrive at
\begin{align}
    &\nabla g^\ast(B^{-\top}\nabla r(\hat u_k)) = - Ax_k + \hat{w}_k \label{eq:interm1} \\ 
    &B\nabla r^\ast\Bigl(-B^\top\nabla  g(Ax_k+w_k)\Bigr) = Bu_k^\ast \label{eq:interm2}\\ \nabla p^\ast(-&\nabla  g(Ax_k+w_k)) = -(Ax_k + Bu^\ast_k + w_k). \label{eq:interm3}
\end{align}
We plug identities \eqref{eq:interm1},\eqref{eq:interm2},\eqref{eq:interm3} into \eqref{eq:breg_inter}. After using the fact that $r,p,g$ are even, we arrive at following equivalent formulation of the cost $J_T$ \eqref{eq:bregdivcost}:
\begin{align} 
    &J_T= \sum_{k=0}^\top  D_{\gamma^2s}(w_k,\hat w_k)-D_{g} \Bigl(Ax_k+w_k,Ax_k+\hat w_k\Bigr)\nonumber\\ &-\!D_{r}(u_k,u_k^\ast)\! -\!D_{p}\Bigl(Ax_k\!+\!Bu_k+w_k,Ax_k+Bu_k^\ast+w_k\Bigr). \label{eq:finaljt}
\end{align}

Since $u_k$ cannot influence the first and second term of the equation \eqref{eq:finaljt}, a necessary condition for positivity is:
\begin{align} \label{cond:breg_pos}
    D_{\gamma^2s}(w_k,\hat w_k)-D_{g} \Bigl(Ax_k+w_k,Ax_k+\hat w_k\Bigr)\geq 0 \quad \forall w_k,  
\end{align}
which holds if and only if 
\begin{align}
    \gamma^2s(\cdot)-g(Ax_k+(\cdot))\quad \text{is convex in }(\cdot).\label{eq:nec}
\end{align} 
This argument establishes the necessesity of the condition presented in \eqref{eq:concave}. This condition is also sufficient because we can choose $u_k=u_k^\ast$. 
\subsubsection{Derivation of the Riccati-like equation \eqref{eq:RICC2}}
In order to derive the Riccati-like equation in \eqref{eq:RICC2}, we perform further manipulation of the necessary and sufficient condition \eqref{cond:breg_pos}. We observe that \eqref{cond:breg_pos} is equivalent to: $-\gamma^2s(\hat w_k)+g(Ax_k+\hat w_k)  \geq \max_{w_k} -\gamma^2s(w_k)+g(Ax_k+w_k).$
Therefore, a necessary condition is that the stationary point $\hat w_k$ \eqref{eq:st1} \eqref{eq:st2} is also a global maximum:
\begin{align} \label{eq:globalmax}
    \!\!-\gamma^2s(\hat w_k)\!+\!g(Ax_k&+\hat w_k)\!=\! \max_{w_k} -\gamma^2s(w_k)\!+\!g(Ax_k+w_k).
\end{align}
Recall that $g^\ast (\cdot) =  r^\ast(B^\top(\cdot))+p^\ast(\cdot)$. By strict convexity of $r$ and $p$, we observe that
\begin{align}
    g(Ax_k + w_k) &= \min_{v_k} r(v_k) + p \Bigl( Ax_k + Bv_k + w_k\Bigr) \nonumber \\ &= r(u^\ast_k) + p \Bigl( Ax_k + Bu^\ast_k + w_k\Bigr). \label{eq:grp}
\end{align}
By uniqueness of $u^\ast_k$ from Lemma \ref{lem:aux}, we have that
\begin{align}\label{eq:sudden}
    g(Ax_k + \hat w_k) = r(\hat u_k) + p \Bigl( Ax_k + B\hat u_k +\hat w_k\Bigr).
\end{align}
We note that the original Riccati-like equation \eqref{eq:riccati-like} can be written as 
\begin{align}\label{eq:ric_mod}
    p(x_k) - q(x_k) = r(\hat{u}_k) - \gamma^2 s(\hat{w}_k) + p\Bigl(Ax_k + B\hat{u}_k + \hat{w}_k\Bigr).\!
\end{align}
Therefore, combining \eqref{eq:globalmax}, \eqref{eq:grp}, and \eqref{eq:ric_mod}, condition \eqref{cond:breg_pos} can be rewritten as 
\begin{align} 
    (p-q) (x) = \max_w -\gamma^2s(w) + g(Ax+w) \quad \forall x. \label{eq:mgs}
\end{align}

Hence we need to understand $(p-q)(\cdot)$ better. For the ease of notation, recall that in \eqref{eq:ppp} we defined $m(x):= p(x) - q(x)$. 
Taking derivative of both sides of \eqref{eq:ric_mod} with respect to $x_k$ yields: $\nabla m(x_k) = \nabla \hat{u}_k^\top \nabla r(\hat{u}_k) - \nabla \hat{w}_k^\top \gamma^2 \nabla s(\hat{w}_k) + \Bigl(A^\top + \nabla \hat{u}_k^\top B^\top  + \nabla \hat{w}_k\Bigr) \nabla p\Bigl(Ax_k + B \hat{u}_k + \hat{w}_k\Bigr).$
Now, using the previous stationary conditions \eqref{eq:m}, we have:
\begin{align} \label{eq:grdmp}
    \nabla m(x_k) = A^\top \nabla p \Bigl(Ax_k + B \hat{u}_k + \hat{w}_k\Bigr).
\end{align}
From Lemma \ref{lem:aux}, we have that that $B^\top \nabla p \Bigl(Ax_k + B \hat{u}_k + \hat{w}_k\Bigr) = - \nabla r(\hat{u}_k)$. Using \eqref{eq:st1}, this implies 
    $B^\top A^{-\top} \nabla m(x_k) = -  \nabla r (\hat{u}_k),$
which, from invertibility of $\nabla r$, entails:
\begin{align} \label{eq:uhatrm}
    \hat{u}_k = - \nabla r^\ast\Bigl(B^\top A^{-\top} \nabla m(x_k)\Bigr)
\end{align}
Moreover, similarly using the stationary equation \eqref{eq:st2} yields
\begin{align}  \label{eq:whatsm}
    \hat{w}_k = \nabla s^\ast\Bigl(\gamma^{-2} A^{-\top} \nabla m(x_k)\Bigr).
\end{align}
Going back to \eqref{eq:grdmp}, using the invertibility of $\nabla p$, we obtain:
\begin{align}\label{eq:pinvax}
    \nabla p^\ast \Bigl( A^{-\top} \nabla m(x_k)\Bigr) = Ax_k + B\hat u_k + \hat w_k .
\end{align}
Plugging in \eqref{eq:uhatrm} and \eqref{eq:whatsm} into \eqref{eq:pinvax}, we arrive at the following:
$\nabla p^\ast \Bigl( A^{-\top} \nabla m(x_k)\Bigr) = Ax_k - B \nabla r^\ast \Bigl(B^\top A^{-\top} \nabla m(x_k)\Bigr) + \nabla s^\ast\Bigl(\gamma^{-2} A^{-\top} \nabla m(x_k)\Bigr).$
Note that the above identity is only a function of $A^{-\top} \nabla m(x_k)$ and $x_k$. To understand $\nabla m(x_k)$, we consider \eqref{eq:mgs}. The objective of the optimization is strictly concave in $w$ and strictly convex in $x$. This implies $m(\cdot)$ will be strictly convex in $x$.  Denoting $\xi:=A^{-\top} \nabla m(x_k)$, we observe that by strict convexity of $m$, $\nabla m$ is invertible, therefore yielding: $p^\ast(\xi)\! + \!r^\ast(B^\top \xi) \!=\! A \nabla m^\ast (A^\top \!\xi) \!-\! B \nabla r^\ast ( B^\top\! \xi) \!+\! \nabla s^\ast(\gamma\!^{-2} \xi).$
All in all, by Assumption \ref{asum:q} on $p,r,s$, we obtain:
\begin{align*}
    p^\ast(\xi) + r^\ast(B^\top \xi) = (p-q)^\ast(A^\top \xi) + \gamma^2 s^\ast (\gamma^{-2}\xi).
\end{align*}
\vspace{3pt}
This concludes the derivation of \eqref{eq:RICC2}. 

\subsubsection{Lyapunov stability}
The function \(p(\cdot)\) acts as a Lyapunov function. 
By combining equations \eqref{eq:ppp} and \eqref{eq:grp}, we obtain: $
   p\Bigl(Ax_k+Bu_k^\ast+w\Bigr)-p(x_k)=g(Ax_k+w_k) - r(u^\ast_k) - \Bigl(q(x_k)+m(x_k)\Bigr).$ 
Then using \eqref{eq:mgs}, we have that for every $\{w_k\}\neq 0$,  $  p\Bigl(Ax_k+Bu_k^\ast+w_k\Bigr)-p(x_k)  \leq m(x_k)+\gamma^2 s(w_k)- r(u_k^\ast) - q(x_k)-m(x_k) =\gamma^2 s(w_k)- r(u_k^\ast) - q(x_k),$ which ensures bounded-input bounded-output stability when $w_k\neq0$.
And for $w_k=0$, we arrive at: $p(Ax_k+Bu_k^\ast)-p(x_k)\leq - r(u_k^\ast) - q(x_k)$
which is negative, thus ensuring Lyapunov stability. 
\subsubsection{Proof of Corollary \ref{corr:1}} see Appendix \ref{app:corr}. 

Finally, we note that while this proof derives a full-information controller $u_k=f(x_k,w_k)$, the extension to state-feedback $u_k=f(x_k)$ is straightforward.

\section{Discussion on the Controller Synthesis}\label{sec: disc}
\textbf{Theorem \ref{thm:main}} gives an explicit stabilizing \emph{nonlinear} full-information controller for a \emph{linear} system when minimizing strictly convex (nonquadratic) costs. 

In practice, the state, control, and disturbance cost functions ($q,r,s$) must adhere to the conditions specified in Theorem \ref{thm:main}. Ideally, the designer selects \(q\), \(r\), and \(s\) and then solves the Riccati-like equation \eqref{eq:RICC2} for \(p\) while ensuring \eqref{eq:concave} holds. However,  obtaining a valid strictly convex $p$ for given \(q\), \(r\), and \(s\) is often challenging. To overcome this difficulty, we propose an \emph{offline} synthesis procedure which retains a workflow close to classical $H_\infty$.

We propose three complementary design options; in each, the designer begins by fixing the structure of either $m(\cdot)$ or $g(\cdot)$. For simplicity, in the first two approaches, we fix $m(\cdot)$ to a quadratic form
$m(x) = x^\top M x,$
with $M \succ 0$, ensuring its strict convexity. In the third approach, we instead fix $g(\cdot)$ to a quadratic form,
$g(x) = x^\top G x,$ with $G \succ 0$. The designer then proceeds with one of the following options.
\begin{enumerate}
\item \textbf{State Cost and Disturbance Specification:} The designer selects the state cost $q(\cdot)$ and the disturbance cost $s(\cdot)$. For a given $M\succ 0$, the control cost $r(\cdot)$ is then determined using Equations~\eqref{eq:ricc1} and \eqref{eq:ricc2}. In this setting, \(M\) must be chosen carefully to ensure that the resulting \(r(\cdot)\) is both convex and positive. Thus, the design problem reduces to finding an \(M \succ 0\) that satisfies these criteria, a task that can be addressed using (possibly convex) optimization techniques. \label{a:1}

 \item \textbf{Control Cost and Disturbance Cost Specification:} Here, the designer chooses the control cost $r(\cdot)$  and disturbance cost $s(\cdot)$. With $M$ specified, the state cost $q(\cdot)$ is computed via the same equations. The challenge lies in finding an $M\succ 0$ that ensures the resulting $q(\cdot)$ is convex and positive, again potentially addressed through (convex) optimization.\label{a:2}

 \item \textbf{Control Cost and State Cost Specification:} In this approach, the designer fixes the control cost $r(\cdot)$ and state cost $q(\cdot)$. With a selected $G\succ0$, the disturbance cost $s(\cdot)$ is then derived. The design problem is to choose $G$ such that the computed $s(\cdot)$ remains convex and positive, which can similarly be achieved using optimization techniques. \label{a:3}
\end{enumerate}
In all approaches, the designer must also ensure that $-\gamma^2s(\cdot)+g(Ax+(\cdot))$ is concave in $(\cdot)$, as stated in Theorem \ref{thm:main}. Once feasibility holds, the designer implements the closed-form controller \eqref{eq:optimal_controller}.

In short, this synthesis retains the classical advantages (one offline step, closed-form law, no online optimization) while accommodating richer specifications. It enables principled safety envelopes, bang--bang or saturating actuation, and sparse inputs by appropriate choices of $(q,r,s)$ and the shaping functions $(m,g)$. 


Next, we present all the different Approaches \ref{a:1},\ref{a:2},\ref{a:3} in details.

\subsection{Designing the control cost \texorpdfstring{$r(\cdot)$}{r(·)} and the disturbance cost \texorpdfstring{$s(\cdot)$}{s(·)}:}
 In this approach, the designer selects even, positive, and strictly convex cost functions \(r(\cdot)\) and \(s(\cdot)\), and fixes a structure of the function $m(\cdot)$, assumed to be $ m(x)=x^\top Mx$, where $M\succ 0$ is a free parameter. 
Theorem~\ref{thm:chooseR} establishes necessary and sufficient conditions for a positive definite matrix \(M\) to exist such that $q(\cdot)$ is convex and positive, and such that the necessary and sufficient conditions of Theorem \ref{thm:main} are satisfied. This guarantees the existence of the optimal controller as stated in Equation~\eqref{eq:optimal_controller}.

\begin{theorem}\label{thm:chooseR}
    Let \( r: \mathbb{R}^d \to \mathbb{R} \) and \( s: \mathbb{R}^n \to \mathbb{R} \) be convex, even, and positive functions chosen by the designer. Consider a stable matrix \( A \). Then, for a given \(M\succ 0\), a causal time-invariant control policy that solves Problem \ref{pb:bregman} exists, i.e., the functions \(p(\cdot)\) and $q(\cdot)$ given by
\begin{align}
   & p^\ast(\xi) =\! -r^\ast \left( B^\top \xi \right) \!+\! \frac{1}{4} \xi^\top A M^{-1} A^\top \xi\!+\!\gamma^2 s^\ast(\gamma^{-2}\xi),\label{eq:UGH}  \\
    &q(x) = p(x) - x^\top M x \label{eq:UGH2}
\end{align}
are convex and positive, and the function $-\gamma^2 s(\cdot)+g(Ax+(\cdot))$ is concave in $(\cdot)$ if and only if \(M\) satisfies the conditions:   
    \begin{align}
        &L\preceq B \nabla^2 r^*\left( B^\top \xi \right) B^\top -\gamma^{-2}\nabla^2 s^\ast(\gamma^{-2}\xi)
        \preceq U,\text{ }\forall \xi\label{eq:pqConvex} \\
        &\text{with }L:=\frac{1}{2} \left( A M^{-1} A^\top - M^{-1} \right), \quad U:= \frac{1}{2} A M^{-1} A^\top, \nonumber \\
        &\text{and for a fixed $x$, }
        \gamma^{-2}\nabla^2 s^\ast \Bigl(\gamma^{-2}\nabla g(Ax+w)\Bigr)\label{eq:MMM}\\
        &\quad\quad\quad\quad\quad\quad+\frac{1}{2}AM^{-1}A^\top\succeq \gamma^{-2}\nabla^2 s^\ast (\nabla s(w)) \quad \forall w.\nonumber
    \end{align}

  Furthermore, suppose that $r(\cdot)$ and $s(\cdot)$ are strongly convex and smooth satisfying
    \begin{equation}\label{eq:rCvxSmooth}
   L_r\preceq \nabla^2r\preceq U_r,\quad L_s\preceq\nabla^2s\preceq U_s, 
    \end{equation}
    for some positive definite matrices \( L_r,U_r \succ 0 \), and \( L_s,U_s \succ 0 \). Then, for any matrix \( A \), for a given \(M\succ 0\), a time-invariant control policy that solves Problem \ref{pb:bregman} exists, if $M^{-1}$ satisfies the following convex feasibility program:
    \begin{equation}\label{eq:cvxopt}
    \begin{aligned}
        &\frac{1}{2} \left( A M^{-1} A^\top - M^{-1} \right) 
        \preceq BU_r^{-1} B^\top -\gamma^2L_s^{-1} \\
        &BU_r^{-1} B^\top -\gamma^2L_s^{-1} \succ 0\\
        & B L_r^{-1} B^\top \preceq \gamma^2U_s^{-1}
        + \frac{1}{2} A M^{-1} A^\top\\  
        &\gamma^{-2}L_s^{-1}\preceq \frac{1}{2}AM^{-1}A^\top+\gamma^{-2}U_s^{-1}.
    \end{aligned}           
    \end{equation}
\end{theorem}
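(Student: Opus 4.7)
My plan is to specialize Theorem~\ref{thm:main} to the quadratic ansatz $m(x) = x^\top M x$, whose Fenchel dual is $m^\ast(y) = \tfrac{1}{4} y^\top M^{-1} y$, and then translate every requirement of Theorem~\ref{thm:main} into a condition on $M$ alone. Substituting this dual form into the Riccati-like identity \eqref{eq:ricc2} of Remark~\ref{rem:def_m} and isolating $p^\ast$ immediately yields \eqref{eq:UGH}, and $q = p - m$ gives \eqref{eq:UGH2}. At that point the pair $(p,q)$ is entirely determined by $M$, so the whole burden of the theorem is to characterize the admissible $M \succ 0$.

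The convexity requirements on $p$ and $q$ are cleanest on the dual side. Differentiating \eqref{eq:UGH} twice gives
\begin{equation*}
\nabla^2 p^\ast(\xi) = -B\nabla^2 r^\ast(B^\top\xi)B^\top + \tfrac{1}{2}AM^{-1}A^\top + \gamma^{-2}\nabla^2 s^\ast(\gamma^{-2}\xi).
\end{equation*}
Since $p$ is convex iff $p^\ast$ is convex, the condition $\nabla^2 p^\ast \succeq 0$ for every $\xi$ produces the upper bound $U$ in \eqref{eq:pqConvex}. For $q$, the identity $\nabla^2 q = \nabla^2 p - 2M$ combined with the Hessian duality $\nabla^2 p = (\nabla^2 p^\ast)^{-1}$ translates convexity of $q$ into $\nabla^2 p^\ast \preceq \tfrac{1}{2}M^{-1}$, producing the lower bound $L$. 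Positivity of $p$ and $q$ then follows from Assumption~\ref{asum:q}: $r^\ast,s^\ast$ vanish with their gradients at the origin, so $p^\ast$ (and hence $p$ and $q$) attain value zero with zero gradient there, and any convex function with that property is nonnegative.

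The concavity requirement \eqref{eq:concave} is more delicate. I would write it as the pointwise Hessian inequality $\nabla^2 g(Ax+w) \preceq \gamma^2 \nabla^2 s(w)$ and invert both sides using $(\nabla^2 s(w))^{-1} = \nabla^2 s^\ast(\nabla s(w))$ and $\nabla^2 g^\ast(\nabla g(y)) = (\nabla^2 g(y))^{-1}$. From \eqref{eq:ricc2} we have $g^\ast(\xi) = \tfrac{1}{4}\xi^\top AM^{-1}A^\top\xi + \gamma^2 s^\ast(\gamma^{-2}\xi)$, hence $\nabla^2 g^\ast(\xi) = \tfrac{1}{2}AM^{-1}A^\top + \gamma^{-2}\nabla^2 s^\ast(\gamma^{-2}\xi)$; substituting $\xi = \nabla g(Ax+w)$ reproduces \eqref{eq:MMM} exactly.

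For the second part of the theorem, the strategy is to replace every pointwise condition by its uniform worst-case counterpart. From $L_r \preceq \nabla^2 r \preceq U_r$ one obtains $U_r^{-1} \preceq \nabla^2 r^\ast \preceq L_r^{-1}$ (and analogously for $s$), and the map $X\mapsto BXB^\top$ preserves the PSD order. Applying these envelopes to the two ends of \eqref{eq:pqConvex} yields the first and third lines of \eqref{eq:cvxopt}; bounding both $\nabla^2 s^\ast$ occurrences in \eqref{eq:MMM} uniformly gives the fourth line; the second line certifies that the resulting relaxation is not vacuous. The main obstacle I anticipate is \eqref{eq:MMM}, because it couples $\nabla^2 s^\ast$ evaluated at two distinct arguments, $\nabla s(w)$ and $\gamma^{-2}\nabla g(Ax+w)$, both of which depend implicitly on the unknown $g$ (and hence on $M$). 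Decoupling them cleanly through the uniform bounds on $\nabla^2 s^\ast$ is precisely what makes \eqref{eq:cvxopt} convex in $M^{-1}$ and free of $(x,w)$.
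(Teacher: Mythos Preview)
Your proposal is correct and follows essentially the same route as the paper: specialize Theorem~\ref{thm:main} to $m(x)=x^\top Mx$, read off $p^\ast$ and $q$, express convexity of $p$ and $q$ as the two-sided Hessian bound on $\nabla^2 p^\ast$ via the inverse-function/Hessian-duality identity, derive \eqref{eq:MMM} by inverting the concavity inequality through $\nabla^2 g^\ast$, and then obtain \eqref{eq:cvxopt} by replacing the pointwise Hessians with their uniform extremal values. The only place your explanation is slightly looser than the paper's is the role of the second line of \eqref{eq:cvxopt}: the paper imposes $BU_r^{-1}B^\top-\gamma^2 L_s^{-1}\succ 0$ specifically to handle unstable $A$ (since otherwise the lower bound $L$ forces $A$ stable), rather than merely to keep the relaxation nonvacuous.
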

\begin{proof}
 The proof is in Appendix \ref{app:1},\ref{app:2}.
\end{proof}
In the case where $A$ is stable, the designer enjoys greater flexibility in selecting the cost function $r$ which need not be strongly convex. In this case, Theorem~\ref{thm:chooseR} reduces the design problem to finding a positive definite matrix \(M\) that satisfies conditions \eqref{eq:pqConvex},\eqref{eq:MMM} which admit relaxations or convexification in structured cases (e.g., $s(w)=w^\top Sw$). 

In the case where $A$ is any matrix (stable or unstable), $r$ and $s$ are chosen to be any strongly convex and smooth, and thus Theorem~\ref{thm:chooseR} reduces the design problem to finding a positive definite matrix \(M\) that satisfies the convex program \eqref{eq:cvxopt}. Once \( M \) is obtained, the corresponding cost functions \( q(\cdot)\) and $p(\cdot)$ are computed from \eqref{eq:UGH}, \eqref{eq:UGH2}. Finally, to compute the optimal controller \eqref{eq:optimal_controller}, we need $\nabla g(\cdot)$ which can be found from: $\nabla g^\ast(\xi)=\frac{1}{2}AM^{-1}A^\top\xi+\nabla s^\ast(\gamma^{-2}\xi).$

\subsection{Designing the state cost \texorpdfstring{$q(\cdot)$}{q(.)} and the disturbance cost \texorpdfstring{$s(\cdot)$}{s(.)}} \label{sec:designQS}
In this approach, the designer selects even, positive, and strictly convex cost functions \(q(\cdot)\) and \(s(\cdot)\), and fixes a structure of the function $m(\cdot)$, assumed to be $ m(x)=x^\top Mx$, where $M\succ 0$ is a free parameter. 
Theorem~\ref{thm:chooseQ} establishes necessary and sufficient conditions for a positive definite matrix \(M\) to exist such that $r(\cdot)$ is convex and positive, and such that the necessary and sufficient conditions of Theorem \ref{thm:main} are satisfied. This guarantees the existence of the optimal controller as stated in Equation~\eqref{eq:optimal_controller}.

\begin{theorem}\label{thm:chooseQ}
    Consider a fully actuated or overactuated system, i.e., \(B\in \mathbb{R}^{n\times d}\) with \(d\geq n\). Let \( q: \mathbb{R}^n \to \mathbb{R} \) and \( s: \mathbb{R}^n \to \mathbb{R} \) be strictly convex, even, and positive functions chosen by the designer. Then, for a given \(M\succ 0\), a causal time-invariant control policy that solves Problem \ref{pb:bregman} exists, i.e., the function \(r(\cdot)\) 
\begin{align}\label{eq:UGH_r} 
r^\ast   \left(  \xi \right)   =&  -p^\ast(B^{- \dagger,\top}\xi)  +  \frac{1}{4}  \xi^\top  B^{-\dagger} A M^{-1} A^\top B^{-\dagger,\top}\xi \nonumber\\
&+\gamma^2 s^\ast(\gamma^{-2}B^{-\dagger,\top}\xi)
\end{align}
is convex (and thus positive), and the function $\gamma^2 s(\cdot)-g(Ax+(\cdot))$ is convex in $(\cdot)$ if and only if \(M\) satisfies the conditions:   
    \begin{align}
        &[\nabla^2 q(x)+2M] [\frac{1}{2}AM^{-1}A^\top \\
        &\quad \quad \quad\quad\quad+\gamma^{-2}\nabla^2 s^\ast(\gamma^{-2}(\nabla q(x)+2M))]\succeq I,\quad \forall x \label{eq:rConvex}\\
        &\text{and for a fixed $x$, }
        \gamma^{-2}\nabla^2 s^\ast \Bigl(\gamma^{-2}\nabla g(Ax+w)\Bigr)\label{eq:MMM2}\\
        &\quad\quad\quad\quad\quad\quad+\frac{1}{2}AM^{-1}A^\top\succeq \gamma^{-2}\nabla^2 s^\ast (\nabla s(w)) \quad \forall w.\nonumber
    \end{align}
  Furthermore, suppose that $q(\cdot)$ is strongly convex and $s(\cdot)$ is strongly convex and smooth satisfying
    \begin{align}
   &\nabla^2 q(\cdot)\succeq L_q, \quad  L_s\preceq \nabla^2 s(\cdot)\preceq U_s, 
    \end{align}
    for some positive definite matrices \( L_q \succ 0 \), and \( L_s,U_s \succ 0 \). Then, for a given \(M\succ 0\), a time-invariant control policy that solves Problem \ref{pb:bregman} exists, if $M$ satisfies the following feasibility program:
    \begin{equation}\label{eq:cvxopt2}
    \begin{aligned}
         &\frac{1}{2}  \gamma^{2}(L_s^{-1}-U_s^{-1})^{-1}\preceq A^{-\top}MA^{-1}\preceq Z_M\\
         &  Z_M\!\!:=\!\!\frac{1}{2}L_q\!+\!\gamma^{-2}L_q U_s^{-1} A^{-\top} MA^{-1}\!+\!M\!+\!2MSA^{-\top} MA^{-1}\!\!
    \end{aligned}           
    \end{equation}

\end{theorem}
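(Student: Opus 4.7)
The plan is to specialize Theorem~\ref{thm:main} to the ansatz $m(x)=x^\top M x$ and to convert its Riccati-like identity \eqref{eq:ricc2} and concavity requirement \eqref{eq:concave} into explicit matrix conditions on $M$. Since $m^\ast(\eta)=\tfrac{1}{4}\eta^\top M^{-1}\eta$, substituting into \eqref{eq:ricc2} yields
\[
p^\ast(\xi)+r^\ast(B^\top\xi)=\tfrac{1}{4}\xi^\top A M^{-1}A^\top\xi+\gamma^2 s^\ast(\gamma^{-2}\xi).
\]
Because $p=q+m$ is pinned down once the designer fixes $q$ and $M$, this identity determines $r^\ast$ on the range of $B^\top$; for a fully actuated or overactuated $B$ (so $B^\top$ has full column rank $n$), the Moore--Penrose pseudoinverse supplies a consistent extension off this range through the change of variable $\xi\mapsto B^{-\dagger,\top}\xi$, producing the explicit formula \eqref{eq:UGH_r}. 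This algebraic step mirrors the derivation used in the companion Theorem~\ref{thm:chooseR}.

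Next I would translate the two requirements of Theorem~\ref{thm:main}---convexity of $r$ and concavity of $-\gamma^2 s(\cdot)+g(Ax+\cdot)$---into pointwise Hessian inequalities. Differentiating \eqref{eq:UGH_r} twice and using the dual identity $\nabla^2 p^\ast(\nabla p(x))=(\nabla^2 q(x)+2M)^{-1}$, convexity of $r^\ast$ (hence of $r$) reduces to
\[
\tfrac{1}{2} AM^{-1}A^\top+\gamma^{-2}\nabla^2 s^\ast(\gamma^{-2}\nabla p(x))\succeq(\nabla^2 q(x)+2M)^{-1}
\]
for every $x$; a symmetric congruence by $\nabla^2 q(x)+2M$ rearranges this into the form \eqref{eq:rConvex}. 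For the concavity test, the explicit expression $g^\ast(\xi)=\tfrac{1}{4}\xi^\top AM^{-1}A^\top\xi+\gamma^2 s^\ast(\gamma^{-2}\xi)$ (from \eqref{eq:ricc2} at quadratic $m$), together with the duality $\nabla^2 g(y)=[\nabla^2 g^\ast(\nabla g(y))]^{-1}$, converts the primal Hessian condition $\nabla^2 g(Ax+w)\preceq\gamma^2\nabla^2 s(w)$ directly into the dual statement \eqref{eq:MMM2}. These two items establish the necessary and sufficient part of the theorem.

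Finally, to obtain the sufficient convex feasibility program \eqref{eq:cvxopt2} I would bound the pointwise Hessians uniformly using $\nabla^2 q\succeq L_q$ and $L_s\preceq\nabla^2 s\preceq U_s$, which dualize to $U_s^{-1}\preceq\nabla^2 s^\ast\preceq L_s^{-1}$ in the Loewner order. Substituting these uniform bounds into \eqref{eq:rConvex} and \eqref{eq:MMM2}, then applying a congruence by $A^{-1}$ on the right and $A^{-\top}$ on the left, should collapse the infinitely many pointwise inequalities into two matrix inequalities in the single variable $A^{-\top}MA^{-1}$: the lower bound arises from the worst case of \eqref{eq:MMM2} and the upper bound $Z_M$ from the worst case of \eqref{eq:rConvex}. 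The hard part will be ensuring that the substitutions preserve the Loewner direction and that the resulting constraints are jointly convex in $M$: the product $[\nabla^2 q+2M][\cdots]$ in \eqref{eq:rConvex} is not itself symmetric, so one must phrase it through a symmetric congruence and check sign-preservation under the worst-case replacements; and the argument of $\nabla^2 s^\ast$ on the right-hand side of \eqref{eq:MMM2} depends on $\nabla g$, which itself depends on $M$, forcing a two-sided simultaneous use of the bounds $U_s^{-1}\preceq\nabla^2 s^\ast\preceq L_s^{-1}$. A Schur-complement rewriting of the resulting quadratic-in-$M$ terms, combined with these uniform dual bounds, is what I expect to produce the clean program \eqref{eq:cvxopt2}.
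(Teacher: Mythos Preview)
Your proposal is correct and follows essentially the same route as the paper: specialize Theorem~\ref{thm:main} with the quadratic ansatz $m(x)=x^\top Mx$, read off $r^\ast$ from the dual Riccati-like identity, enforce convexity of $r$ and the concavity test via Hessian positivity together with the inverse--function/Fenchel relation $\nabla^2 f(\nabla f^\ast(\xi))\,\nabla^2 f^\ast(\xi)=I$, and then pass to the feasibility program by replacing each pointwise Hessian by its uniform Loewner bound (the $X_{\max}\preceq Y_{\min}$ maneuver). The paper itself only sketches this for Theorem~\ref{thm:chooseQ} by pointing to the detailed argument given for Theorem~\ref{thm:chooseR}, and your outline matches that template; your explicit flagging of the non-symmetric product in \eqref{eq:rConvex} and the need to phrase it as a congruence is exactly the care required.
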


\begin{proof}
 The proof idea is in the Appendix \ref{app:2}.
\end{proof}

In the case where $q$ and $s$ are chosen to be any strongly convex and smooth, Theorem~\ref{thm:chooseQ} reduces the design problem to finding a positive definite matrix \(M\) that satisfies the feasibility program \eqref{eq:cvxopt2} which can be further relaxed to a convex program. Once \( M \) is obtained, the corresponding cost function \( r(\cdot)\) is computed from \eqref{eq:UGH_r}. Finally, to compute the optimal controller \eqref{eq:optimal_controller}, we need $\nabla g(\cdot)$ which can be found from: $\nabla g^\ast(\xi)=\frac{1}{2}AM^{-1}A^\top\xi+\nabla s^\ast(\gamma^{-2}\xi).$

\subsection{Designing the state cost \texorpdfstring{$q(\cdot)$}{q(.)} and the control cost \texorpdfstring{$r(\cdot)$}{r(.)}}\label{sec:designQR}

In this approach, the designer selects even, positive, and strictly convex cost functions \(q(\cdot)\) and \(r(\cdot)\), and fixes a structure of the function $g(\cdot)$, assumed to be $ g(x)=x^\top Gx$, where $G\succ 0$ is a free parameter. 
Theorem~\ref{thm:chooseQR_} establishes necessary and sufficient conditions for a positive definite matrix \(G\) to exist such that $s(\cdot)$ is convex and positive, and such that the necessary and sufficient conditions of Theorem \ref{thm:main} are satisfied. This guarantees the existence of the optimal controller as stated in Equation~\eqref{eq:optimal_controller}.

\begin{theorem}\label{thm:chooseQR_}
    Consider a fully actuated or overactuated system, i.e., \(B\in \mathbb{R}^{n\times d}\) with \(d\geq n\). Let \( q: \mathbb{R}^n \to \mathbb{R} \) and \( r: \mathbb{R}^n \to \mathbb{R} \) be strictly convex, even, and positive functions chosen by the designer. Then, for a given \(G\succ 0\), a causal time-invariant control policy that solves Problem \ref{pb:bregman} exists, i.e., the function \(r(\cdot)\) 
\begin{align}\label{eq:UGH_s} 
   s^\ast(\xi)=\gamma^{-2}\left(p^\ast(\gamma^2\xi)+r^\ast(\gamma^2B^\top\xi)-m^\ast(\gamma^2A^\top\xi)\right)
\end{align}
is convex (and thus positive), and the function $\gamma^2 s(\cdot)-g(Ax+(\cdot))$ is convex in $(\cdot)$ if and only if \(G\) satisfies the conditions:   
    \begin{align}
    & \frac{1}{2}G^{-1}-B\nabla^2 r^\ast(B^\top \xi)B^\top \succeq 0\quad \forall \xi\label{pCVX}\\
    &\frac{1}{2}G^{-1}\preceq B\nabla^2 r^\ast(B^\top \nabla p(\xi))B^\top + \nabla^2 q^\ast (\nabla q(\xi))\quad \forall \xi\label{mCVX}\\
    & G+2A^\top GA B\nabla^2 r^\ast(\nabla p(\xi))B^\top G\nonumber\\
    &+\nabla^2 q(\xi)B\nabla^2 r^\ast(\nabla p(\xi))B^\top G\succeq A^\top GA +\frac{1}{2}\nabla^2 q(\xi)\quad \forall \xi\label{sCVX}
    \end{align}

  Furthermore, suppose that $q(\cdot)$ is smooth and strongly convex and $r(\cdot)$ is smooth satisfying
    \begin{align}
    L_q\preceq \nabla^2 q(\cdot)\preceq U_q, \quad \nabla^2 r(\cdot)\preceq U_r, \text{ or, } \nabla^2 r^\ast(\cdot)\succeq U_r^{-1}
    \end{align}
    for some positive definite matrices \( L_q,U_q \succ 0 \), and \( U_r \succ 0 \). Then, for a given \(G\succ 0\), a time-invariant control policy that solves Problem \ref{pb:bregman} exists, if $G$ satisfies the following feasibility program:
    \begin{equation}\label{eq:cvxopt3}
    \begin{aligned}  
        &G\preceq \frac{1}{2}B^{-\top,\dagger} \nabla^2 r(B^\top x)B^{-1} \quad \forall x\\
        &G\succeq \frac{1}{2}[BU_r^{-1}B^\top + U_q^{-1}]^{-1}\\
        &G\!+\!2A^\top GA BU_r^{-1}B^\top G\!+\!L_qBU_r^{-1}B^\top  G\succeq A^\top GA \!+\!\frac{1}{2}U_q
    \end{aligned}           
    \end{equation}

\end{theorem}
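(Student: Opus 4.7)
The plan is to mirror the strategy of Theorems~\ref{thm:chooseR} and \ref{thm:chooseQ}: parametrize the free shaping function (here $g$) as the positive-definite quadratic $g(x)=x^\top G x$, recover $s$ from the Riccati-like identity \eqref{eq:ricc2}, and then translate every hypothesis of Theorem~\ref{thm:main} into an explicit Hessian condition on $G$. The first step is immediate: substituting $g^\ast(\xi)=\tfrac14\xi^\top G^{-1}\xi$ into \eqref{eq:ricc2} together with the splitting $g^\ast=p^\ast+r^\ast(B^\top\cdot)$, and rescaling $\xi\mapsto\gamma^2\xi$, produces \eqref{eq:UGH_s}. The resulting triple $(p,m,s)$ is the candidate; what remains is to characterize when it fulfills the premises of Theorem~\ref{thm:main}, namely strict convexity of $p$, strict convexity of $m=p-q$, convexity of $s$, and concavity of $-\gamma^2 s(\cdot)+g(Ax+\cdot)$ in its second argument.

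Each requirement I would reduce to a Hessian inequality. Convexity of $p$ is equivalent to $\nabla^2 p^\ast(\xi)=\tfrac12 G^{-1}-B\nabla^2 r^\ast(B^\top\xi)B^\top\succeq 0$, which is \eqref{pCVX}. Convexity of $m$ amounts to $\nabla^2 p\succeq\nabla^2 q$; using $\nabla^2 p=[\nabla^2 p^\ast(\nabla p)]^{-1}$ and $[\nabla^2 q]^{-1}=\nabla^2 q^\ast(\nabla q)$ converts this into $\nabla^2 p^\ast(\nabla p(\xi))\preceq\nabla^2 q^\ast(\nabla q(\xi))$, which on inserting the explicit $\nabla^2 p^\ast$ becomes \eqref{mCVX}. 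Differentiating \eqref{eq:UGH_s} twice and applying the identity $\nabla^2 p^\ast+B\nabla^2 r^\ast(B^\top\cdot)B^\top=\tfrac12 G^{-1}$ yields $\nabla^2 s^\ast(\xi)=\tfrac{\gamma^2}{2}G^{-1}-\gamma^2 A\nabla^2 m^\ast(\gamma^2 A^\top\xi)A^\top$; the concavity hypothesis $\gamma^2\nabla^2 s\succeq 2G$ collapses after Fenchel conversion to $\nabla^2 m^\ast\succeq 0$, i.e.\ exactly \eqref{mCVX}, whereas convexity of $s$ ($\nabla^2 s^\ast\succeq 0$) is the stronger $\nabla^2 m\succeq 2A^\top GA$; substituting $\nabla^2 p=[\tfrac12 G^{-1}-H]^{-1}$ with $H:=B\nabla^2 r^\ast(B^\top\nabla p)B^\top$ and performing a Woodbury-style rearrangement produces \eqref{sCVX}.

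For the sufficient feasibility program \eqref{eq:cvxopt3}, I would drop the pointwise-in-$\xi$ dependence in \eqref{pCVX}--\eqref{sCVX} using the uniform strong-convexity and smoothness bounds $L_q\preceq\nabla^2 q\preceq U_q$ and $\nabla^2 r^\ast\succeq U_r^{-1}$. The lower bound $H\succeq BU_r^{-1}B^\top$ combined with \eqref{pCVX} gives the first LMI; the uniform bound $\nabla^2 q^\ast\succeq U_q^{-1}$ inserted into \eqref{mCVX} gives the second; and substituting the worst-case bounds on both $\nabla^2 q$ and $H$ into the rearranged \eqref{sCVX} yields the third.

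The main obstacle is the derivation of \eqref{sCVX}: the raw requirement $\nabla^2 m\succeq 2A^\top GA$ has to be re-expressed through the implicit identity $\nabla^2 p=[\tfrac12 G^{-1}-H]^{-1}$, in which $H$ itself depends on $\nabla p(\xi)$. Carrying the matrix inversion through in a way that simultaneously preserves equivalence (rather than merely sufficiency) and produces a form amenable to the LMI relaxation of Step~3 is the technical crux; the remaining steps are essentially bookkeeping on Fenchel duals once this roadmap is in place.
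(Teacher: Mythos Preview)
Your proposal is correct and follows essentially the same route the paper indicates: the paper only states that the proof of Theorem~\ref{thm:chooseQR_} rests on the same ideas as Theorem~\ref{thm:chooseR} (Hessian positivity for convexity, positivity from Assumption~\ref{asum:q}, and the Fenchel identity $\nabla^2 f(\nabla f^\ast(\xi))\,\nabla^2 f^\ast(\xi)=I$), without spelling out the details. Your reduction of \eqref{pCVX} and \eqref{mCVX} is exactly what those tools yield, and your observation that the concavity condition $\gamma^2\nabla^2 s\succeq 2G$ collapses to $\nabla^2 m^\ast\succeq 0$ while convexity of $s$ gives the strictly stronger $\nabla^2 m\succeq 2A^\top GA$ (whence \eqref{sCVX}) is correct and in fact more explicit than anything the paper records for this theorem.
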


 \begin{proof}
 The proof idea is in the Appendix \ref{app:2}.
\end{proof}

In the case where $q$ is chosen to be any strongly convex and smooth function and $r$ any smooth function, Theorem~\ref{thm:chooseQR_} reduces the design problem to finding a positive definite matrix \(G\) that satisfies the feasibility program \eqref{eq:cvxopt3} which can be further relaxed to a convex program. Once \( G \) is obtained, the corresponding cost function \( s(\cdot)\) is computed from \eqref{eq:UGH_s}. Finally, to compute the optimal controller \eqref{eq:optimal_controller}, we need $\nabla g(\cdot)$ which is simply $\nabla g(x)=2Gx$.

\begin{remark}
In each approach, instead of merely solving a feasibility problem to obtain $M$ or $G$, the designer may opt to penalize the norm of $M$ or $G$, or employ an alternative cost function. Such modifications can significantly influence controller performance.
\end{remark}
\begin{remark}
    Note that Theorems \ref{thm:chooseQ} and \ref{thm:chooseQR_} assume a fully or overactuated system. The underactuated system case, i,e, $n>d$ requires more care and is left for future work.
\end{remark}

\section{Applications and Simulations}
We illustrate the framework on a scalar system $x_{k+1}=ax_k+bu_k+w_k$. We begin with an input-limited design. 

\subsection{Input-limited control through designing \texorpdfstring{$r(\cdot)$}{r(.)} and \texorpdfstring{$s(\cdot)$}{s(.)}} Many systems operate under hard per-step actuator limits (power, torque, duty cycle). We encode a strict amplitude budget \(t\) via
\begin{align}
r(u) &= 
\begin{cases}
u^2 & \text{if } |u| < t\\
  \infty & \text{if } |u| \geq t
\end{cases} 
\end{align}
For simplicity, we choose $s(w)=sw^2$, $s>0$, and fix $m(x)=mx^2$, for some $m$. The induced state cost becomes 
\begin{align}
q(x) &= \begin{cases}
v_\gamma(\frac{1}{a^2-b^2v_\gamma }-1)x^2, \quad\text{if } |x| \leq \frac{a^2t}{v_\gamma b}-tb,\\
   v_\gamma(\frac{1}{a^2}-1)x^2+\frac{2v_\gamma bt}{a^2}|x|+t^2(\frac{v_\gamma b^2}{a^2}-1),& \\\text{if } |x| > \frac{a^2t}{v_\gamma b}-tb
\end{cases}
\end{align}
with $v_\gamma= \frac{\gamma^2msa^2}{m+\gamma^2 a^2s}.$
The optimal controller becomes
\begin{align*}
    u^{\ast}(x) =
    \begin{cases}
    -\,\dfrac{b v_\gamma}{a} \, x, & \text{if } \left| x \right| \leq \dfrac{a t}{b v_\gamma}, \\[10pt]
    -\,a \cdot \text{sign}(x), & \text{if } \left| x \right| > \dfrac{a t}{b v_\gamma}.
    \end{cases}
\end{align*}
Figure~\ref{fig:bangR-scalar} considers the system with $a=0.6$ and $b=1$ and compares our controller (for $s=1$, $m=0.11$, and $t=0.1$) with the standard infinite-horizon $H_\infty$ controller (with unit weights). The left-hand-side figures show the state evolution and input control as a function of time under the worst-case $H_\infty$ disturbance. The right-hand-side figures do the same for the worst-case disturbance of our controller. In both cases, $\gamma = 1.32$ (chosen slightly above \(\gamma_{H_\infty}\)). Note that our controller maintains competitive performance even though the control signal is bounded by $t=0.1$. 


\begin{figure}[h]
  \centering
  \includegraphics[width=\columnwidth]{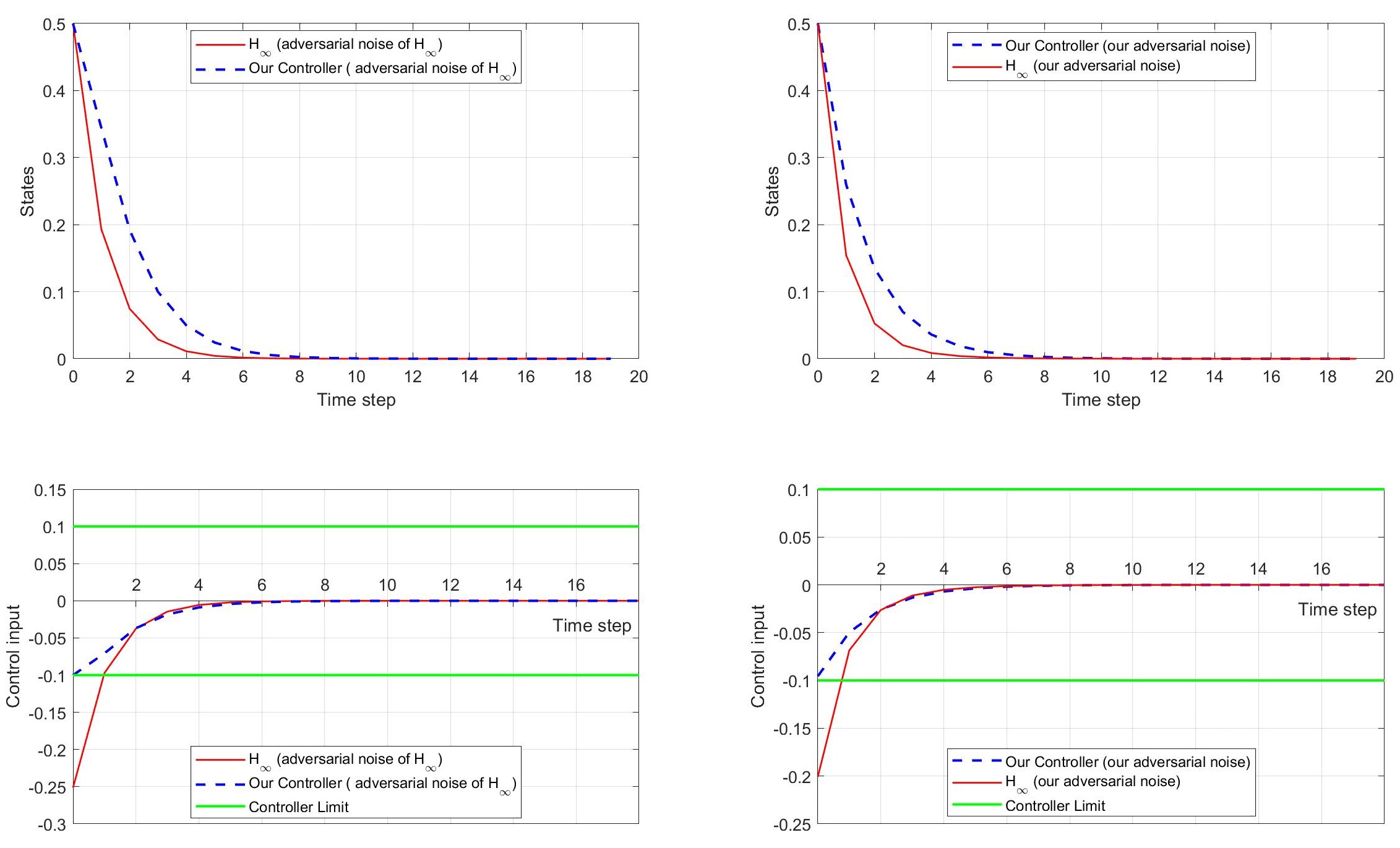}
  \caption{}
  \label{fig:bangR-scalar}
\end{figure}
Next, we consider other designs, in particular, 1) the safety envelope ~
$
\!\!q(x)\!\!=\!\!\begin{cases}
x^2,& \!\!\!|x|<t\\
\infty,&\!\!\! |x|\ge t
\end{cases}
$, and 2) the exponential cost \(r(u)=e^{|u|}-|u|-1\).

\subsection{Safety control through designing \texorpdfstring{$q(\cdot)$}{q(.)} and \texorpdfstring{$s(\cdot)$}{s(.)}}
Many applications require a hard state envelope (e.g., safety corridors or keep-out zones). We encode a strict safety radius \(t>0\) via
\begin{align}
q(x) &=
\begin{cases}
x^2, & \text{if } |x|<t,\\
\infty, & \text{if } |x|\ge t~.
\end{cases}
\end{align}
For simplicity, we choose \(s(w)=s\,w^2\) with \(s>0\), and fix \(m(x)=m\,x^2\) for some \(m>0\). The induced input cost (from \eqref{eq:UGH_r}) becomes
\begin{align}
r(u) &= 
\begin{cases}
\frac{v_\gamma(v_\gamma+1)b^2}{(v_\gamma+1)a^2-v_\gamma}u^2 \quad \text{if } |u| < \frac{v_\gamma+1}{b v_\gamma} t a^2 - \frac{t}{b},\\
\frac{v_\gamma b^2}{a^2}(|u|+ \frac{t}{b})^2 - (v_\gamma+1)t^2\\
\text{if } |u| \ge \frac{v_\gamma+1}{b v_\gamma} t a^2 - \frac{t}{b},
\end{cases} 
\end{align}
with $v_\gamma= \frac{\gamma^2m sa^2}{m+\gamma^2 a^2s}.$

The optimal controller becomes
$$
u^\ast(x,w) =
\begin{cases}
\Bigl(-\dfrac{a}{b} \,  + \dfrac{v_\gamma}{a b (v_\gamma + 1)} \,\Bigr) (x+\frac{w}{a}) \\ \text{if } \left| \dfrac{2v_\gamma}{a} (x+\frac{w}{a}) \right| \leq 2 (v_\gamma + 1) t, \\[10pt]
-\dfrac{a}{b} \, (x+\frac{w}{a}) + \dfrac{t \cdot \text{sign}(x)}{b} \\ \text{if } \left| \dfrac{2v_\gamma}{a} (x+\frac{w}{a}) \right| > 2 (v_\gamma + 1) t.
\end{cases}
$$
Figure~\ref{fig:bangQ-scalar} considers the system with $a=1.1$ and $b=1$ and compares our controller (for $s=1$, $m=0.2$, and $t=0.2$) with the standard infinite-horizon $H_\infty$ controller (with unit weights). The left-hand-side figures show the state evolution and input control as a function of time under white noise. The mid figures so the same for uniform noise, and the right-hand-side figures do the same for Laplacian noise. In all cases, $\gamma=1.32$ (chosen slightly above \(\gamma_{H_\infty}\)). Note that our controller maintains competitive performance even though the state is kept within the hard safety envelope $\lvert x\rvert \le 0.2$.

\begin{figure}[h]
  \centering
  \includegraphics[width=\columnwidth]{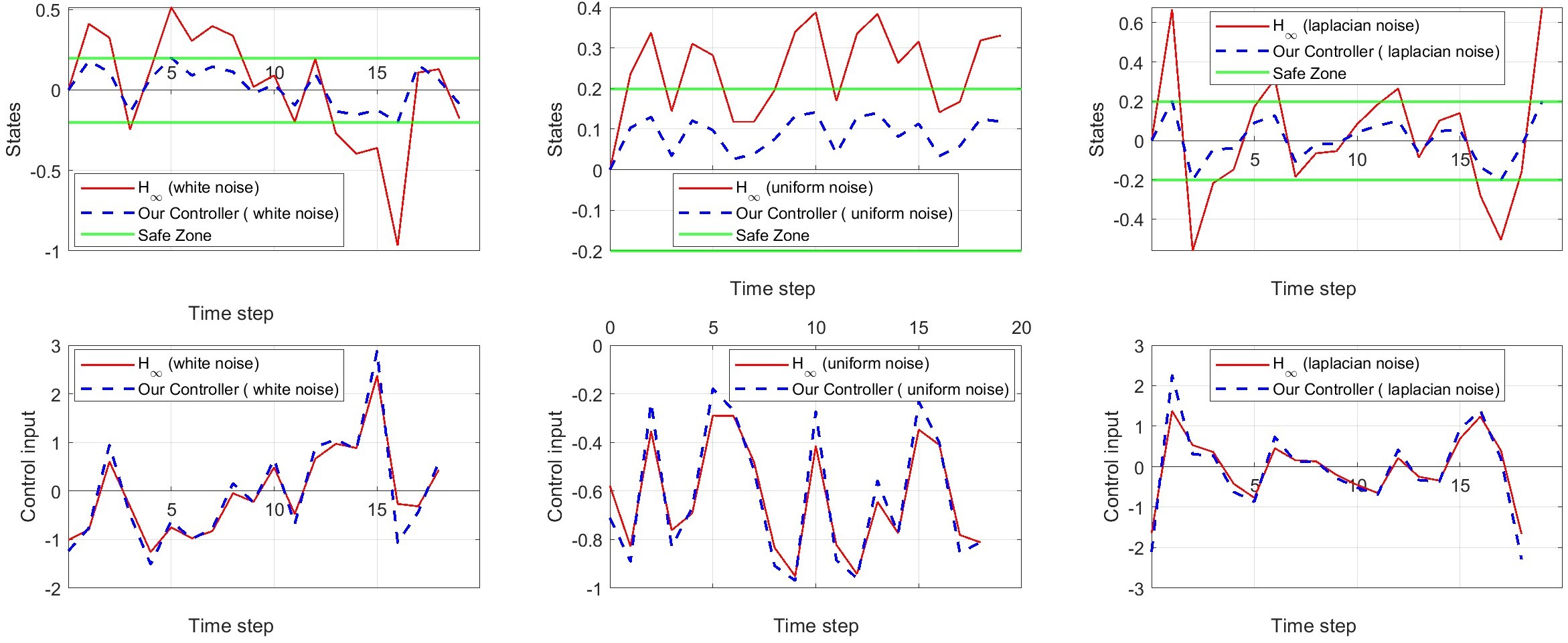}
  \caption{}
  \label{fig:bangQ-scalar}
\end{figure}

\subsection{Exponential Control through designing \texorpdfstring{$q(\cdot)$}{q(.)} and \texorpdfstring{$r(\cdot)$}{r(.)} }
Many systems operate in regimes where one wishes to use as little control effort as possible while still regulating the state. We encode a heavy penalty on the control through an exponential cost, while retaining a quadratic state penalty. Concretely, we set
\begin{align*}
r(u) &= e^{|u|}-|u|-1, \qquad
q(x)=x^2. 
\end{align*}
For simplicity, we fix \(g(x)=g\,x^2\) for some \(g>0\).
The optimal controller becomes
\begin{align*}
u^{\ast}(x,w) \;=\; -\,\operatorname{sign}\!\bigl(b(ax+w)\bigr)\,\log\!\Bigl( \bigl|\,2bg(ax+w)\,\bigr|+1 \Bigr).
\end{align*}
In this setting, the disturbance cost \(s(\cdot)\) and terminal cost \(m(\cdot)\) do not admit simple closed-form expressions, so the worst-case disturbance cannot be written in closed form.

Figure~\ref{fig:rEXP} considers the system with \(a=0.9\) and \(b=0.1\) and compares our controller (for \(g=15\)) with the standard infinite-horizon \(H_\infty\) controller (with unit weights). The left-hand-side figures show the state evolution and input control as a function of time under white noise. The mid figures so the same for uniform noise, and the right-hand-side figures do the same for Laplacian noise. In all cases, \(\gamma=7.45\)(chosen slightly above \(\gamma_{H_\infty}\)). Across all cases, the exponential penalty yields noticeably smaller control amplitudes while maintaining state trajectories comparable to the quadratic \(H_\infty\) baseline.

\begin{figure}[h]
  \centering
  \includegraphics[width=\columnwidth]{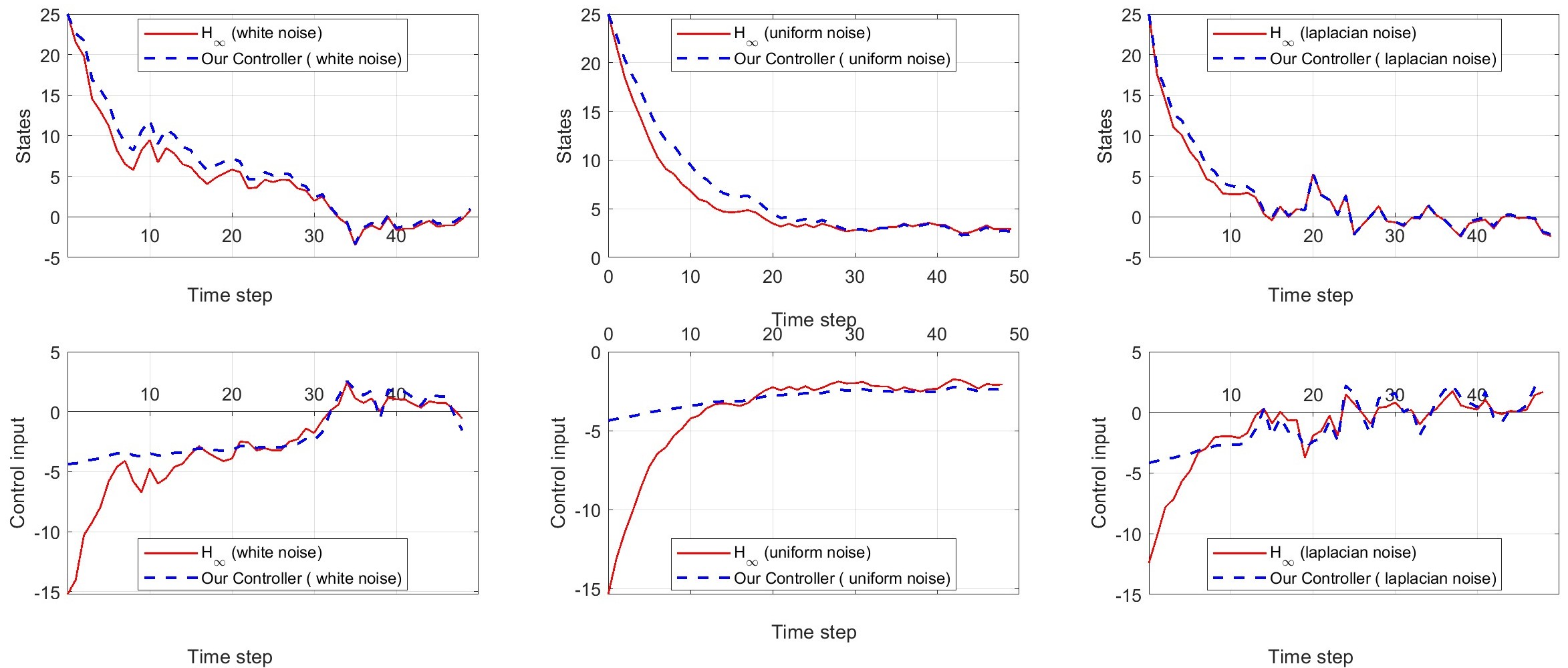}
  \caption{}
  \label{fig:rEXP}
\end{figure}

\section{Conclusion}
This work extends \(H_\infty\) control from quadratic costs to more general cost formulations, while retaining an explicit controller design. Future work will focus on constructing cost functions $(q,r,s)$ that satisfy the conditions of Theorem~\ref{thm:main} through exploring nonquadratic functions $(m,g)$.

\bibliographystyle{ieeetr}
\bibliography{refs}
\appendix
\subsection{Extended Proof of Theorem \ref{thm:main}}\label{app:B}

Our aim is to extend the argument in Section~\ref{sec:proofmain}—which currently assumes $B$ is invertible—to the general case where $n>d$ or $n<d$.  Indeed, the only step that invokes invertibility of $B$ is the passage from \eqref{eq:bregdivcost} to \eqref{eq:finaljt}.  Hence, we will show that \eqref{eq:finaljt} is equal to \eqref{eq:bregdivcost} without inverting $B$.

Given equation \eqref{eq:bregdivcost}, we have that \eqref{eq:st1}, \eqref{eq:st2} and \eqref{eq:riccati-like} hold.
Then, we define $g_f(.)=g^\ast(.)=r^\ast (B^\top(.))+p^\ast (.)$. Thus, Lemma \ref{lem:aux} and equations \eqref{eq:grp}, \eqref{eq:sudden} hold.

Now, we show that
\[
\Delta \;:=\; \bigl(\eqref{eq:bregdivcost}\bigr)
\;-\;
\bigl(\eqref{eq:finaljt}\bigr)
\]
vanishes identically.
Dropping the time index for clarity, we have:
$\Delta=D_{\gamma^2s}(w,\hat w)-D_r(u,\hat u)-D_p(Ax+Bu+w,Ax+B\hat u+\hat w)-D_{\gamma^2s}(w,\hat w)+D_g(Ax+w,Ax+\hat w)+D_r(u,u^\ast )+D_p(Ax+Bu+w,Ax+B u^\ast +w)$. Expanding the Bregman divergence terms through their definitions, we get: $\Delta
= \sum_{j=1}^{16} T_j,$
where each \(T_j\) is marked by an underbrace (i)–(xvi).  Concretely,

\begin{align*}
&\Delta= \underbrace{-r(u)}_{i}
  + \underbrace{r(\hat u)}_{ii}
  + \underbrace{(u-\hat u)^\top\nabla r(\hat u)}_{iii}
   \underbrace{-p(Ax+Bu+w)}_{iv}\\
  &+ \underbrace{p(Ax+B\hat u+\hat w)}_{v}
  + \underbrace{(u-\hat u)^\top B^\top\nabla p(Ax+B\hat u+\hat w)}_{vi}\\
  &+ \underbrace{(w-\hat w)^\top\nabla p(Ax+B\hat u+\hat w)}_{vii}
  + \underbrace{g(Ax+w)}_{viii}\\
  & \underbrace{-g(Ax+\hat w)}_{ix}
   \underbrace{-(w-\hat w)^\top\nabla g(Ax+\hat w)}_{x}
\\
  &+ \underbrace{r(u)}_{xi}
   \underbrace{-r(u^\ast)}_{xii}
  + \underbrace{(u-u^\ast)^\top\nabla r(u^\ast)}_{xiii}
  + \underbrace{p(Ax+Bu+w)}_{xiv}
\\
  & \underbrace{-p(Ax+B u^\ast + w)}_{xv}
   \underbrace{-(u-u^\ast)^\top B^\top\nabla p(Ax+B u^\ast + w)}_{xvi}
\end{align*}
Note that: a) $(i)+(xi)=0$, b) $(ii)+(v)+(ix)=0$ using \eqref{eq:sudden}, c) $(iii)+(vi)=0$ using \eqref{eq:st1}, d)  $(iv)+ (xiv)=0$ , e) $(vii)+(x)=0$ using Lemma \ref{lem:aux}, f) $(viii)+(xii)+(xv)=0$ using \eqref{eq:grp}, g) $(xiii)+(xvi)=0$ using Lemma \ref{lem:aux}. Thus all terms cancel yielding \(\Delta=0\) and hence
$\eqref{eq:bregdivcost}=\eqref{eq:finaljt}$ which concludes the proof.

\subsection{Proof of Corollary \ref{corr:1}}\label{app:corr}

Assume quadratic costs
\(
q(x)=x^\top Qx,\;
r(u)=u^\top Ru,\;
s(w)=w^\top Sw
\)
with \(Q,R,S\succ0\), and let \(p(x)=x^\top Px\) with \(P\succ0\).
Thus we have:
$
p^\ast(\xi)=\frac{1}{4}\,\xi^\top P^{-1}\xi,\quad
r^\ast(B^\top\xi)=\frac{1}{4}\,\xi^\top BR^{-1}B^\top\xi,$ $
(p-q)^\ast(A^\top\xi)=\frac{1}{4}\,\xi^\top A(P-Q)^{-1}A^\top\xi,
$
and
\(
s^\ast(\gamma^{-2}\xi)=\frac{1}{4}\,\gamma^{-4}\,\xi^\top S^{-1}\xi.
\)

Plugging these into the Riccati-like identity \eqref{eq:RICC2} gives, for all \(\xi\),
\[
\frac{1}{4}\,\xi^\top\!\bigl(P^{-1}\!+\!BR^{-1}B^\top\bigr)\xi
\;=\;
\frac{1}{4}\,\xi^\top\!\bigl(A(P\!-\!Q)^{-1}\!A^\top+\gamma^{-2}S^{-1}\bigr)\xi,
\]
hence the matrix equality
\begin{equation}\label{eq:invform}
P^{-1}+BR^{-1}B^\top
\;=\; A(P-Q)^{-1}A^\top+\gamma^{-2}S^{-1},
\end{equation}
with $G:=\bigl(P^{-1}+BR^{-1}B^\top\bigr)^{-1}.$
Equivalently,
\begin{equation}\label{eq:invRicc}
    P
=Q+A^\top\!\bigl(P^{-1}+BR^{-1}B^\top-\gamma^{-2}S^{-1}\bigr)^{-1}\!A.
\end{equation}
This is an inverse-form discrete-time \(H_\infty\) ARE.
\paragraph*{(i) Recovering the classical Riccati equation}

Let
$$H\;:=\;P^{-1}+BR^{-1}B^\top-\gamma^{-2}S^{-1},$$ 
and introduce the block matrix
\(
\mathcal{M}=\begin{bmatrix}
\operatorname{diag}(-R,\,\gamma^2 S) & \begin{bmatrix} B^\top \\[1pt] I \end{bmatrix}\\[4pt]
\begin{bmatrix} B & I \end{bmatrix} & P^{-1}
\end{bmatrix}
\);
its Schur complements satisfy \(\mathcal{M}/\operatorname{diag}(-R,\gamma^2 S)=H\) and \(\mathcal{M}/P^{-1}=-R_e^c\), where
$
R_e^c \;=\;
\begin{bmatrix}
R + B^\top P B & B^\top P \\
P B & -\gamma^2 S + P
\end{bmatrix}
$. Then, by the block inverse formula, we get:
\begin{equation}\label{eq:keyHinv}
\;
(\mathcal{M}^{-1})_{22}=H^{-1}
\;=\;
P \;-\; P\,[\,B\;\; I\,]\,(R_e^c)^{-1}\!\begin{bmatrix} B^\top \\[2pt] I \end{bmatrix} P.
\;
\end{equation}

Plugging \eqref{eq:keyHinv} in \eqref{eq:invRicc} gives
\begin{align}
P
&= Q + A^\top P A
- A^\top P [\,B\;\; I\,](R_e^c)^{-1}\!\begin{bmatrix} B^\top \\[1pt] I \end{bmatrix} P A \nonumber\\
&= A^\top P A + Q - K_c^\top R_e^c K_c,
\end{align}
with
\(K_c:=(R_e^c)^{-1}\!\begin{bmatrix} B^\top \\[1pt] I \end{bmatrix} P A\).
This is the \emph{steady-state} (stationary infinite-horizon) quadratic \(H_\infty\) Riccati equation of Theorem~\ref{thm:robust_ctrl}.

\paragraph*{(ii) Recovering the negativity condition}
With
\(
G=\bigl(P^{-1}+BR^{-1}B^\top\bigr)^{-1},
\)
the matrix inversion lemma gives,
\begin{equation}\label{eq:Gwoodbury}
G
\;=\;
P - P B (R + B^\top P B)^{-1} B^\top P.
\end{equation}
Condition \eqref{eq:concave} becomes $-\gamma^2S+G\preceq0$.
Replacing $G$ with \eqref{eq:Gwoodbury}, we get:
$$-\gamma^2S+P - P B (R + B^\top P B)^{-1} B^\top P\preceq 0,$$
which is the the standard negativity condition in steady state. 

\paragraph*{(iii) Recovering the linear controller}
With quadratic \(r,g\): \(\nabla r^\ast(y)=\frac{1}{2} R^{-1}y\), \(\nabla g(z)=2Gz\).
Thus the optimal controller \eqref{eq:optimal_controller} becomes
$u^\ast(x,w)
= -\,\nabla r^\ast\bigl(B^\top \nabla g(Ax+w)\bigr)
= -\,R^{-1}B^\top G(Ax+w).
$
Using \(R^{-1}B^\top(P^{-1}+BR^{-1}B^\top)^{-1}=(R+B^\top PB)^{-1}B^\top P\) and \eqref{eq:Gwoodbury} gives
\[
u^\ast(x,w)=-(R+B^\top P B)^{-1} B^\top P\,(Ax+w),
\]
i.e., the classical full-information law in steady state.

Thus, under quadratic costs, the Riccati-like identity \eqref{eq:RICC2}, the concavity test \eqref{eq:concave}, and the controller \eqref{eq:optimal_controller} reduce exactly to the classical discrete-time \(H_\infty\) ARE, negativity condition, and linear central controller, respectively.

\subsection{Proof Idea of Theorems \ref{thm:chooseR}, \ref{thm:chooseQ} and \ref{thm:chooseQR_} }\label{app:1}

The proofs rely on the following ideas:

\begin{enumerate}
    \item \textbf{Convexity via Hessian Positivity:}  
    A function \( f \) is convex if
    \[
    \nabla^2 f(x) \succ 0 \quad \forall x.
    \]
    \item By imposing Assumption \ref{asum:q} as a constraint, convexity implies positivity of the functions.
    
    \item \textbf{Fenchel Duality and the Inverse Function Theorem:}  
    For a convex function \( f \) and its Fenchel conjugate \( f^\ast \), one obtains (via the inverse function theorem)
    \[
    \nabla^2 f\bigl(\nabla f^\ast(\xi)\bigr)\,\nabla^2 f^\ast(\xi)=I.
    \]This crucial relation shows that locally the Hessian of \( f\) is the inverse of the Hessian of \( f^\ast \).

\end{enumerate}
Combining these ideas along with some matrix inequalities provides sufficient conditions for the overall functions to be convex.

\subsection{Proof of Theorem \ref{thm:chooseR}}\label{app:2}
In order for equation \eqref{eq:RICC2} to hold, we need to ensure $\gamma^2s(\cdot)-g(Ax+(\cdot))$ is convex for a fixed $x$, and $p(\cdot)$ and $q(.)$ are convex (which will imply $p(\cdot)$ and $q(.)$ are positive as will be seen in the proof).

To ensure the convexity of $p(.)$, we utilize the following result from convex analysis:
\begin{lemmma}\label{lemm:posFenchel}
    If a function $f$ is convex, positive, and $f(0)=0$, then its Fenchel dual $f^\ast$ is also convex, positive and $f^\ast(0)=0$, and vice versa. 
\end{lemmma}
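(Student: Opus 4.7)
The plan is to verify each of the three claimed properties of $f^\ast$ directly from the definition $f^\ast(y) = \sup_x \{x^\top y - f(x)\}$, and then obtain the ``vice versa'' direction by invoking the Fenchel--Moreau biconjugate identity. I expect no serious technical obstacle: each of the three properties reduces to a one-line argument from the variational formula for $f^\ast$, and the only subtlety is that the converse implication needs $f$ to be convex and lower semicontinuous, which is already built in by Assumption \ref{asum:q} (strict convexity and finiteness of $f$ on all of $\mathbb{R}^n$).

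For the forward direction, I would first dispose of convexity of $f^\ast$, which uses none of the hypotheses on $f$: for each fixed $x$ the map $y \mapsto x^\top y - f(x)$ is affine in $y$, and the pointwise supremum of a family of affine (hence convex) functions is convex. Next, I would compute $f^\ast(0) = \sup_x \{-f(x)\} = -\inf_x f(x)$; because $f \geq 0$ with $f(0)=0$, the infimum equals $0$ and is attained at $x=0$, giving $f^\ast(0)=0$. Finally, to show positivity, I would simply evaluate the supremum at the admissible point $x=0$, obtaining $f^\ast(y) \geq 0^\top y - f(0) = 0$ for every $y$.

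For the reverse direction, I would leverage the fact that throughout the paper $f$ is strictly convex and real-valued, and hence closed, proper, and convex in the sense required by Fenchel--Moreau. That theorem then gives $f^{\ast\ast} = f$. Applying the already-established forward direction to $f^\ast$ in the role of the ``new primal'' shows that $(f^\ast)^\ast$ is convex, nonnegative, and vanishes at $0$; since this biconjugate coincides with $f$, the three properties transfer back to $f$ as claimed.

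The hard part here is essentially nonexistent; the only point to handle carefully is to make explicit that the ``vice versa'' half relies on $f$ being convex and lower semicontinuous so that $f^{\ast\ast}=f$ holds, and to note that this regularity is already guaranteed by the standing assumptions, so the lemma needs no extra hypothesis beyond what the paper already posits.
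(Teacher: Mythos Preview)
Your proposal is correct; the paper states this lemma without proof (treating it as a standard fact from convex analysis), so there is nothing to compare against. Your three one-line arguments for the forward direction and the Fenchel--Moreau biconjugate argument for the converse are the standard route and are fully adequate here.
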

Then, using Lemma \ref{lemm:posFenchel}, we will observe that $p^{\ast\ast}$ is convex, even, positive and by choosing $p:= p^{\ast\ast}$, the results follow. 
Thus, for the quadratic function $m(x)=x^\top  M x$, convexity of $p$ is equivalent to
  \begin{equation}\label{eq:conv_p_hessian}
    \nabla^2 p^\ast \succeq 0,
  \end{equation}
  and from \ref{eq:RICC2} we have:
  \begin{align}\label{eq:key_operator}
    &\nabla^2 p^\ast(\xi)=A\nabla^2 m^\ast(A^\top  \xi)A^\top 
    +\gamma^2\nabla^2 s^\ast(\gamma^{-2}\xi)\nonumber\\
    &\quad \quad \quad \quad -B\nabla^2 r^\ast(B^\top  \xi)B^\top \\
    &\text{thus we need: }\\
    &B\nabla^2 r^\ast(B^\top  \xi)B^\top 
    - \gamma^2\nabla^2 s^\ast(\gamma^{-2}\xi)
    \preceq A\nabla^2 m^\ast(A^\top  \xi)A^\top .
  \end{align}


Now, after ensuring convexity of $p^\ast$, note that its gradient at the origin is given by
$\nabla p^\ast(0)
= A\,\nabla m^\ast(A^\top 0)\,A^\top
+ \nabla s^\ast(\gamma^{-2}0)
- B\,\nabla r^\ast(B^\top 0).
$
By Assumption \ref{asum:q}, $\nabla s^\ast(0)=\nabla r^\ast(0)=0$, and since $\nabla m^\ast(x)=\frac{1}{2}M^{-1}x$ we have $\nabla m^\ast(0)=0$.  Hence $\nabla p^\ast (0)=0$, which for a convex function implies that the origin is its global minimizer and therefore $p^\ast (x)\ge0$ for all $x$.  Similarly, because $\nabla q(0)=\nabla p(0)-\nabla m(0)=0$, convexity of $q$ also guarantees $q(x)\ge0$ for all $x$.  We thus turn to establishing the convexity of $q$.

  To ensure the convexity of $q$, we require
  \begin{equation}\label{eq:conv_q_hessian}
    \nabla^2 p(x) - 2M \succeq 0,
    \quad\forall x
  \end{equation}
  Using the inverse function theorem (see Section~\ref{app:1}), we then have
  \begin{equation}\label{eq:conj_hess_bound}
    \nabla^2 p^\ast (\nabla p(x)) \preceq \frac{1}{2} M^{-1},
    \quad\forall x,
  \end{equation}
 Taking $\tilde{x}:=\nabla p(x)$, we observe that since \eqref{eq:conj_hess_bound} holds for every $x$
  and $\nabla p$ is bijective,
  \begin{equation}\label{eq:conj_hess_global}
    \nabla^2 p^\ast (\tilde x) \preceq \frac{1}{2} M^{-1},
    \quad\forall\,\tilde x.
  \end{equation}
which gives 
  \begin{align}
    &\nabla^2 p^\ast \!(\cdot)\!=\!\frac{1}{2} A M^{-1}\!\!A^\top \!\! +\! \gamma^2\nabla^2 s^\ast \!(\gamma^{-2}(\cdot))
    \!-\! B\nabla^2 r^\ast \!(B^\top \!(\cdot))B^\top\! \nonumber\\
    &\preceq \frac{1}{2} M^{-1}
  \end{align}
or equivalently,
  \begin{align}
    &B\nabla^2 r^\ast (B^\top (\cdot))B^\top 
    \;-\;\gamma^2\nabla^2 s^\ast (\gamma^{-2}(\cdot))\nonumber\\
    &
    \succeq \frac{1}{2} A M^{-1}A^\top  - \frac{1}{2} M^{-1}.
  \end{align}
  Therefore, combining \eqref{eq:key_operator} and \eqref{eq:conj_hess_global} gives the two-sided bound which ensures convexity of both $p$ and $q$:
  \begin{align}\label{eq:pqConvex_restate}
    L \preceq B\nabla^2 r^\ast (B^\top  \xi)B^\top 
    - \gamma^{-2}\nabla^2 s^\ast (\gamma^{-2}\xi) \preceq U,
    \quad\forall\xi,
  \end{align}
  where
  \[ L:=\frac{1}{2}\bigl(A M^{-1}A^\top  - M^{-1}\bigr),
     \qquad U:=\frac{1}{2} A M^{-1}A^\top . \]
  Finally, note that if
  \[ \inf_\xi \bigl[B\nabla^2 r^\ast (B^\top \xi)B^\top  - \gamma^{-2}\nabla^2 s^\ast (\gamma^{-2}\xi)\bigr] = 0, \]
  then \eqref{eq:pqConvex_restate} forces
  \[ \frac{1}{2}\bigl(A M^{-1}A^\top  - M^{-1}\bigr) \preceq 0, \]
  implying stability of $A$ since $M\succ 0$. This justifies the stability assumption in the first statement of the theorem.

  Ensuring the convexity of $\gamma^2s(\cdot)-g(Ax+(\cdot))$ via \eqref{eq:MMM} in Theorem \ref{thm:chooseR} follows a similar line of arguments (convex analysis and Inverse function theorem) and will be ommitted.

\paragraph{Strong Convexity and Smoothness}
Under the extra assumption that $r^\ast $ and $s^\ast $ are both strongly convex and smooth, the second part of the theorem follows easily. Recall the elementary fact: if two matrix‐valued functions satisfy
$
X_{\min}\;\preceq\;X(t)\;\preceq\;X_{\max}
$$
\text{and } Y_{\min}\;\preceq\;Y(t)\;\preceq\;Y_{\max}
\quad\forall\,t,
$
then a sufficient condition for \(X(t)\preceq Y(t)\) for all \(t\) is
$
\sup_t X(t)\;\preceq\;\inf_t Y(t),
\quad\text{i.e.}\quad
X_{\max}\;\preceq\;Y_{\min}.
$
We apply this observation to obtain every inequality in \eqref{eq:cvxopt} by bounding each matrix function above and below by its extremal values. The only exception is
$B\,U_r^{-1}B^\top \;-\;\gamma^2\,L_s^{-1}\;\succ\;0$
which must be imposed directly when \(A\) may be unstable. 

This concludes the proof of Theorem \ref{thm:chooseR}.

\end{document}